\newtheorem{theorem}{Theorem}
\newtheorem{proposition}[theorem]{Proposition}
\newtheorem{lemma}[theorem]{Lemma}
\newtheorem{corollary}[theorem]{Corollary}
\theoremstyle{definition}
\newtheorem{definition}{Definition}
\theoremstyle{remark}
\newtheorem{remark}{Remark}
\newtheorem{example}{Example}
\newcommand{\myparagraph}[1]{\paragraph{{#1}}}
\newcommand{\N}{\mathbb{N}} \newcommand{\Z}{\mathbb{Z}}
\DeclarePairedDelimiter\size{\lvert}{\rvert}
\DeclarePairedDelimiter\length{\lVert}{\rVert}
\DeclarePairedDelimiter\rel{\llbracket}{\rrbracket}
 \newcommand{\dom}{\mathrm{dom}}
\newcommand{\inp}{\mathbbm{i}} \newcommand{\outp}{\mathbbm{o}}
\newcommand{\dz}{\textsf{d}_0} \newcommand{\data}{\mathcal{D}}
\newcommand{\NLogSpace}{\textsc{NLogSpace}\xspace}
\newcommand{\PTime}{\textsc{PTime}\xspace}
\newcommand{\PSpace}{\textsc{PSpace}\xspace}
\newcommand{\nra}{\textnormal{\textsf{NRA}}\xspace}
\newcommand{\nrt}{\textnormal{\textsf{NRT}}\xspace}
\newcommand{\drt}{\textnormal{\textsf{DRT}}\xspace}
\newcommand{\syn}{\textnormal{\textsf{syn}}\xspace}
\newcommand{\nft}{\textnormal{\textsf{NFT}}\xspace} \newcommand{\tst}{\phi}
\newcommand{\Tst}{\textnormal{\textsf{Tst}}}
\newcommand{\asgn}{\textnormal{\textsf{asgn}}}
\newcommand{\projin}{\textnormal{\textsf{in}}}
\newcommand{\projout}{\textnormal{\textsf{out}}}
\newcommand{\states}{\textnormal{\textsf{st}}}
\newcommand{\trace}{\textnormal{\textsf{tr}}}
\newcommand{\lab}{\textsf{lab}} \newcommand{\dt}{\textsf{dt}}
\newcommand{\mismatch}{\textnormal{\textsf{mismatch}}}
\newcommand\myxrightarrow[2][]{
  \xrightarrow[{\raisebox{1.25ex-\heightof{$\scriptstyle#1$}}[0pt]{$\scriptstyle#1$}}]{#2}%
}
\providecommand*{\shuffle}{%
  \mathbin{\mathpalette\shuffle@{}}%
} \newcommand*{\shuffle@}[2]{%
  \sbox0{$#1\vcenter{}$}%
  \kern .15\ht0 
  \rlap{\vrule height .25\ht0 depth 0pt width 2.5\ht0}%
  \raise.1\ht0\hbox to 2.5\ht0{%
    \vrule height 1.75\ht0 depth -.1\ht0 width .17\ht0 %
    \hfill \vrule height 1.75\ht0 depth -.1\ht0 width .17\ht0 %
    \hfill \vrule height 1.75\ht0 depth -.1\ht0 width .17\ht0 %
  }%
  \kern .15\ht0 
} \makeatother
\def\orcidID#1{\smash{\href{http://orcid.org/#1}{\protect\raisebox{-1.25pt}{\protect\includegraphics{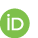}}}}}
\begin{document}
\title{On Computability of Data Word Functions Defined by
  Transducers} 
%
%

\author[1,2]{L\'eo Exibard\thanks{Corresponding author: \texttt{leo.exibard@ulb.ac.be}.
    Funded by a FRIA fellowship from the F.R.S.-FNRS.}\orcidID{0000-0003-0318-1217}} 
\author[1]{Emmanuel Filiot\thanks{Research associate of F.R.S.-FNRS. He is
    supported by the ARC Project Transform F\'ed\'eration Wallonie-Bruxelles and
    the FNRS CDR J013116F and MIS F451019F projects.}}
\author[2]{Pierre-Alain Reynier\thanks{Partly funded by the DeLTA project (ANR-16-CE40-0007).}}
%
%
\affil[1]{Universit\'e Libre de Bruxelles, Brussels, Belgium}
\affil[2]{Aix-Marseille Universit\'e, Marseille, France}
\date{}
\maketitle 
\begin{abstract}
  In this paper, we investigate the problem of synthesizing computable functions
  of infinite words over an infinite alphabet (data $\omega$-words). The notion
  of computability is defined through Turing machines with infinite inputs which
  can produce the corresponding infinite outputs in the limit. We use
  non-deterministic transducers equipped with registers, an extension of
  register automata with outputs, to specify functions. Such transducers may not
  define functions but more generally relations of data $\omega$-words, and we
  show that it is \PSpace-complete to test whether a given transducer defines a
  function. Then, given a function defined by some register transducer, we show
  that it is decidable (and again, \PSpace-c) whether such function is
  computable. As for the known finite alphabet case, we show that computability
  and continuity coincide for functions defined by register transducers, and
  show how to decide continuity. We also define a subclass for which those
  problems are \PTime.

  \bigskip

  Keywords: Data Words \textperiodcentered{} Register Automata \textperiodcentered{} Register
    Transducers \textperiodcentered{} Functionality \textperiodcentered{} Continuity \textperiodcentered{} Computability.
\end{abstract}
\section{Introduction}


\myparagraph{Context} Program synthesis aims at deriving, in an automatic way, a
program that fulfils a given specification. Such setting is very appealing when
for instance the specification describes, in some abstract formalism (an
automaton or ideally a logic), important properties that the program must
satisfy. The synthesised program is then \emph{correct-by-construction} with
regards to those properties. It is particularly important and desirable for the
design of safety-critical systems with hard dependability constraints, which are
notoriously hard to design correctly.

Program synthesis is hard to realise for general-purpose programming languages
but important progress has been made recently in the automatic synthesis of
\emph{reactive systems}. In this context, the system continuously receives input
signals to which it must react by producing output signals. Such systems are not
assumed to terminate and their executions are usually modelled as infinite words
over the alphabets of input and output signals. A specification is thus a set of
pairs (\textsf{in},\textsf{out}), where \textsf{in} and \textsf{out} are
infinite words, such that \textsf{out} is a legitimate output for \textsf{in}.
Most methods for reactive system synthesis only work for \emph{synchronous}
systems over \emph{finite} sets of input and output signals $\Sigma$ and
$\Gamma$. In this synchronous setting, input and output signals alternate, and
thus \emph{implementations} of such a specification are defined by means of
\emph{synchronous} transducers, which are B\"uchi automata with transitions of
the form $(q,\sigma,\gamma,q')$, expressing that in state $q$, when getting
input $\sigma\in\Sigma$, output $\gamma\in\Gamma$ is produced and the machine
moves to state $q'$. We aim at building \emph{deterministic} implementations, in
the sense that the output $\gamma$ and state $q'$ uniquely depend on $q$ and
$\sigma$.
The realisability problem of specifications given as synchronous
non-deterministic transducers, by implementations defined by synchronous
deterministic transducers is known to be decidable~\cite{BuLa69,PnuRos:89}. In
this paper, we are interested in the \emph{asynchronous} setting, in which
transducers can produce none or several outputs at once every time some input is
read, i.e., transitions are of the form $(q,\sigma,w,q')$ where $w\in \Gamma^*$.
However, such generalisation makes the realisability problem
undecidable~\cite{CarayolL14,DBLP:conf/icalp/FiliotJLW16}.

\myparagraph{Synthesis of Transducers with Registers} In the above setting,
the set of signals is considered to be finite. This assumption is not
realistic in general, as signals may come with unbounded information (e.g.
process ids) that we call here \emph{data}. To address this limitation, recent
works have considered the synthesis of reactive systems processing \emph{data
  words}~\cite{DBLP:conf/atva/KhalimovMB18,Ehlers:2014:SI:2961203.2961226,DBLP:conf/concur/KhalimovK19,DBLP:conf/concur/ExibardFR19}.
Data words are infinite words over an alphabet $\Sigma\times \data$, where
$\Sigma$ is a finite set and $\data$ is a possibly infinite countable set. To
handle data words, just as automata have been extended to \emph{register
  automata}, transducers have been extended to \emph{register transducers}. Such
transducers are equipped with a finite set of registers in which they can store
data and with which they can compare data for equality or inequality. While the
realisability problem of specifications given as synchronous non-deterministic
register transducers ($\nrt_\syn$) by implementation defined by synchronous
deterministic register transducers ($\drt_\syn$) is undecidable, decidability is
recovered for specifications defined by universal register transducers and by
giving as input the number of registers the implementation must
have~\cite{DBLP:conf/concur/ExibardFR19,DBLP:conf/atva/KhalimovMB18}.

\myparagraph{Computable Implementations} In the previously mentioned works, both
for finite or infinite alphabets, implementations are considered to be
deterministic transducers. Such an implementation is guaranteed to use only a
constant amount of memory (assuming data have size $O(1)$). While it makes sense
with regards to memory-efficiency, some problems turn out to be undecidable, as
already mentioned: realisability of $\nrt_\syn$ specifications by $\drt_\syn$,
or, in the finite alphabet setting, when both the specification and
implementation are asynchronous. In this paper, we propose to study computable
implementations, in the sense of (partial) functions $f$ of data $\omega$-words
computable by some Turing machine $M$ that has an infinite input $x\in \dom(f)$,
and must produce longer and longer prefixes of the output $f(x)$ as it reads
longer and longer prefixes of the input $x$. Therefore, such a machine produces
the output $f(x)$ in the limit. We denote by $\textsf{TM}$ (for Turing machine)
the class of Turing machines computing functions in this sense. As an example,
consider the function $f$ that takes as input any data $\omega$-word $u =
(\sigma_1,d_1)(\sigma_2,d_2)\dots$ and outputs $(\sigma_1,d_1)^\omega$ if $d_1$
occurs at least twice in $u$, and otherwise outputs $u$. This function is not
computable, as an hypothetic machine could not output anything as long as $d_1$
is not met a second time. However, the following function $g$ is computable. It
is defined only on words $(\sigma_1,d_1)(\sigma_2,d_2)\dots$ such that
$\sigma_1\sigma_2\dots\in ((a+b)c^*)^\omega$, and transforms any
$(\sigma_i,d_i)$ by $(\sigma_i, d_1)$ if the next symbol in $\{a,b\}$ is an $a$,
otherwise it keeps $(\sigma_i,d_i)$ unchanged. To compute it, a \textsf{TM}
would need to store $d_1$, and then wait until the next symbol in $\{a,b\}$ is
met before outputting something. Since the finite input labels are necessarily
in $((a+b)c^*)^\omega$, this machine will produce the whole output in the limit.
Note that $g$ cannot be defined by any deterministic register transducer, as it
needs unbounded memory to be implemented.

However, already in the finite alphabet setting, the problem of deciding if a
specification given as some non-deterministic synchronous transducer is
realisable by some computable function is open. The particular case of
realisability by computable functions of universal domain (the set of all
$\omega$-words) is known to be
decidable~\cite{DBLP:journals/corr/abs-1209-0800}. In the asynchronous setting,
the undecidability proof of~\cite{CarayolL14} can be easily adapted to show the
undecidability of realisability of specifications given by non-deterministic
(asynchronous) transducers by computable functions.

\myparagraph{Functional Specifications} As said before, a specification is in
general a relation from inputs to outputs. If this relation is a function, we
call it functional. Due to the negative results just mentioned about the
synthesis of computable functions from non-functional specifications, we instead
here focus on the case of functional specifications and address the following
general question: given the specification of a function of data $\omega$-words,
is this function ``implementable'', where we define ``implementable'' as ``being
computable by some Turing machine''. Moreover, if it is implementable, then we
want a procedure to automatically generate an algorithm that computes it. This
raises another important question: how to decide whether a specification is
functional ? We investigate these questions for asynchronous register
transducers, here called register transducers. This asynchrony allows for much
more expressive power, but is a source of technical challenge.

\myparagraph{Contributions} In this paper, we solve the questions mentioned
before for the class of (asynchronous) non-deterministic register transducers
(\nrt). We also give fundamental results on this class. In particular, we prove
that:
\begin{enumerate}
\item deciding whether an \nrt defines a function is \PSpace-complete,
\item deciding whether two functions defined by \nrt are equal on the
  intersection of their domains is \PSpace-complete,
\item the class of functions defined by \nrt is effectively closed under
  composition,
\item computability and continuity are equivalent notions for functions defined
  by \nrt, where continuity is defined using the classical Cantor distance,
\item deciding whether a function given as an \nrt is computable is \PSpace-c,
\item those problems are in \textsc{PTime} for a subclass of \nrt, called
  test-free \nrt.
\end{enumerate}

Finally, we also mention that considering the class of deterministic register
transducers (\drt for short) instead of computable functions as a yardstick for
the notion of being ``implementable'' for a function would yield undecidability.
Indeed, given a function defined by some \nrt, it is in general undecidable to
check whether this function is realisable by some \drt, by a simple reduction
from the universality problem of non-deterministic register
automata~\cite{Neven:2004:FSM:1013560.1013562}.


\myparagraph{Related Work} The notion of continuity with regards to Cantor
distance is not new, and for rational functions over finite alphabets, it was
already known to be decidable~\cite{DBLP:journals/tcs/Prieur02}. Its connection
with computability for functions of $\omega$-words over a finite alphabet has
recently been investigated in~\cite{DBLP:journals/corr/abs-1906-04199} for
one-way and two-way transducers. Our results lift some of theirs to the setting
of data words. The model of test-free \nrt can be seen as a one-way
non-deterministic version of a model of two-way transducers considered
in~\cite{DBLP:conf/fossacs/Durand-Gasselin16}.

\myparagraph{Outline} In \autoref{sec:prelims}, we define register
transducers and automata. In \autoref{sec:functions}, we study functions defined
by \nrt: we show that their functionality problem is decidable and that they
are closed under composition. In
\autoref{sec:computability-continuity}, we connect computability and continuity
for those functions, and prove that these notions are decidable.
Finally, we study the test-free restriction in \autoref{sec:testFree}.

\section{Data Words and Register Transducers}
\label{sec:prelims}

For a (possibly infinite) set $S$, we denote by $S^*$ (resp. $S^\omega$) the set
of finite (resp. infinite) words over this alphabet, and we let $S^\infty = S^*
\cup S^\omega$. For a word $u = u_1 \dots u_n$, we denote $\length{u} = n$ its length,
and, by convention, for $u \in S^\omega, \length{u} = \infty$. The empty word is denoted $\varepsilon$. For $1\leq i\leq
j \leq \length{u}$, we let $u[i{:}j] = u_i u_{i+1} \dots u_j$ and $u[i] =
u[i{:}i]$ the $i$th letter of $u$. For $u,v \in S^\infty$, we say that $u$ is a
prefix of $v$, written $u \preceq v$, if there exists $w \in S^\infty$ such that
$v = uw$. In this case, we define $u^{-1}v = w$. For $u,v \in S^\infty$, we say
that $u$ and $v$ \emph{mismatch}, written $\mismatch(u,v)$, when there exists a
position $i$ such that $1 \leq i \leq \length{u}$, $1 \leq i \leq \length{v}$
and $u[i] \neq v[i]$. Finally, for $u,v \in S^\infty$, we denote by $u \wedge v$
their longest common prefix, i.e. the longest word $w \in S^\infty$ such that $w
\preceq u$ and $w \preceq v$.

\myparagraph{Data Words} In the whole paper, $\Sigma$ and $\Gamma$ are two
finite alphabets and $\mathcal{D}$ is a countably infinite set of elements
called \emph{data}. We will use letter $\sigma$ (resp. $\gamma$, $d$) to denote
elements of $\Sigma$ (resp. $\Gamma$, $\data$). We also distinguish an
(arbitrary) data value $\dz\in\data$. Given a set $R$, let $\tau_0^R$ be the
constant function defined by $\tau_0^R(r)=\dz$ for all $r\in R$. Given a finite
alphabet $A$, a \emph{labelled data} is a pair $x = (a,d)\in A\times \data$,
where $a$ is the \emph{label} and $d$ the \emph{data}. We define the projections
$\lab(x) = a$ and $\dt(x) = d$. A \emph{data word} over $A$ and $\mathcal{D}$ is
an infinite sequence of labelled data, i.e. a word $w\in (A \times
\mathcal{D})^\omega$. We extend the projections $\lab$ and $\dt$ to data words
naturally, i.e. $\lab(w)\in A^\omega$ and $\dt(w) \in \data^\omega$. A
\emph{data word language} is a subset $L\subseteq (A \times \data)^\omega$. Note
that in this paper, data words are infinite, otherwise they are called
\emph{finite data words}.

\subsection{Register Transducers}
Register transducers are transducers recognising data word relations. They are
an extension of finite transducers to data word relations, in the same way
register automata~\cite{Kaminski:1994:FA:194527.194534} are an extension of
finite automata to data word languages. Here, we define them over infinite data words
with a B\"uchi acceptance condition, and allow multiple registers to contain the
same data, with a syntax close to~\cite{DBLP:journals/jcss/LibkinTV15}.
The current data can be 
compared for equality with the register contents via tests, which
are symbolic and defined via Boolean formulas of the following
form. Given $R$ a set of registers, a \emph{test} is a formula $\phi$
satisfying the following syntax:
$$
\phi\ ::=\ \top\mid \bot\mid r^=\mid r^{\neq} \mid \phi\wedge \phi \mid \phi\vee
\phi\mid \neg \phi
$$
where $r\in R$. Given a valuation $\tau : R\rightarrow \data$, a
test $\phi$ and a data $d$, we denote by $\tau,d\models \phi$ the
satisfiability of $\phi$ by $d$ in valuation $\tau$, defined as $\tau,d\models r^=$
if $\tau(r) = d$ and $\tau,d\models r^{\neq}$ if $\tau(r)\neq d$. The
Boolean combinators behave as usual. We denote by $\Tst_R$ the set of
(symbolic) tests over $R$. 
\begin{definition}
  A non-deterministic register transducer (\nrt) is a tuple $T = (Q,
  R, i_0, F, \Delta)$, where $Q$ is a finite set of \emph{states}, $i_0 \in Q$
  is the \emph{initial} state, $F \subseteq Q$ is the set of \emph{accepting}
  states, $R$ is a finite set of \emph{registers} and $\Delta \subseteq Q \times
  \Sigma \times \Tst_R \times 2^R \times (\Gamma \times R)^* \times Q$ is a finite
  set of \emph{transitions}. We write $q \myxrightarrow[T]{\sigma, \tst \mid
    \asgn, o} q'$ for $(q,\sigma,\tst,\asgn,o,q') \in \Delta$ (we omit $T$ when
  clear from the context).
\end{definition}
The semantics of a register transducer is given by a
labelled transition system: we define $L_T = (C, \Lambda, \rightarrow)$, where
$C = Q \times (R \rightarrow \data)$ is the set of configurations, $\Lambda =
(\Sigma \times \data) \times (\Gamma \times \data)^*$ is the set of labels, and
we have, for all $(q,\tau), (q',\tau') \in C$ and for all $(l,w) \in \Lambda$,
that $(q,\tau) \myxrightarrow{(l,w)} (q',\tau')$ whenever there exists a
transition $q \myxrightarrow[T]{\sigma, \tst \mid \asgn, o} q'$
such that, by writing $l = (\sigma',d)$ and $w = (\gamma'_1,d_1) \dots
(\gamma'_n,d_n)$:
\begin{itemize}
\item (Matching labels) $\sigma = \sigma'$
\item (Compatibility) $d$ satisfies the test $\tst \in \Tst_R$, i.e. 
  $\tau,d \models \tst$.
\item (Update) $\tau'$ is the successor register configuration of $\tau$ with
  regards to $d$ and $\asgn$: $\tau'(r)=d$ if $r \in \asgn$, and $\tau'(r) =
  \tau(r)$ otherwise
\item(Output) By writing $o = (\gamma_1, r_1) \dots (\gamma_m, r_m)$, we have
  that $m=n$ and for all $1 \leq i \leq n$, $\gamma_i = \gamma'_i$ and $d_i =
  \tau'(r_i)$.
\end{itemize}

Then, a \emph{run} of $T$ is an infinite sequence of configurations and
transitions $\rho = (q_0, \tau_0) \myxrightarrow[L_T]{(u_1,v_1)} (q_1,\tau_1)
\myxrightarrow[L_T]{(u_2,v_2)} \cdots$. Its input is $\projin(\rho) = u_1 u_2
\dots$, its output is $\projout(\rho) = v_1 \cdot v_2 \dots$. We also define its
sequence of states $\states(\rho) = q_0 q_1 \dots$, and its \emph{trace}
$\trace(\rho) = u_1 \cdot v_1 \cdot u_2 \cdot v_2 \dots$. Such run is
\emph{initial} if $(q_0,\tau_0) = (i_0, \tau_0^R)$. It is \emph{final} if it
satisfies the B\"uchi condition, i.e. $\mathrm{inf}(\states) \cap F \neq
\varnothing$, where $\mathrm{inf}(\states) = \{q \in Q \mid q = q_i$ for
infinitely many $i\}$.
Finally, it is \emph{accepting} if it is both initial and final. We then write
$(q_0,\tau_0) \myxrightarrow[T]{u \mid v}$ to express that there is a final run
$\rho$ of $T$ starting from $(q_0,\tau_0)$ such that $\projin(\rho) = u$ and
$\projout(\rho) = v$. In the whole paper, and unless stated otherwise, we always
assume that the output of an accepting run is infinite ($v \in (\Gamma \times
\data)^\omega$), which can be ensured by a B\"uchi condition.

A \emph{partial run} is a finite prefix of a run. The notions of input, output
and states are extended by taking the corresponding prefixes. We then write
$(q_0,\tau_0) \myxrightarrow[T]{u \mid v} (q_n,\tau_n)$ to express that there is
a partial run $\rho$ of $T$ starting from configuration $(q_0,\tau_0)$ and
ending in configuration $(q_n,\tau_n)$ such that $\projin(\rho) = u$ and
$\projout(\rho) = v$.

Finally, the relation represented by a transducer $T$ is:
\begin{align*}
  \rel{T} = \bigl\{(u,v) \in (\Sigma \times \data)^\omega \times (\Gamma \times \data)^\omega \mid\ &\text{there exists an accepting run $\rho$ of $T$} \\
                                                                                                    & \text{such that } \projin(\rho) = u \text{ and } \projout(\rho) = v\bigr\}
\end{align*}

\begin{example}\label{ex:Trename}
  As an example, consider the register transducer
  $T_{\mathsf{rename}}$ depicted in \autoref{fig:Trename}. It
  realises the following transformation: consider a setting in which
  we deal with logs of communications between a set of clients.  Such
  a log is an infinite sequence of pairs consisting of a tag, chosen
  in some finite alphabet $\Sigma$, and the identifier of the client
  delivering this tag, chosen in some infinite set of data values. The
  transformation should modify the log as follows: for a given client
  that needs to be modified, each of its messages should now be
  associated with some new identifier. The transformation has to verify
  that this new identifier is indeed free, \emph{i.e.} never used in
  the log.  Before treating the log, the transformation receives as
  input the id of the client that needs to be modified (associated
  with the tag $\mathsf{del}$), and then a sequence of identifiers (associated
  with the tag $\mathsf{ch}$), ending with $\#$. The
  transducer is non-deterministic as it has to guess which of these
  identifiers it can choose to replace the one of the client. In
  particular, observe that it may associate multiple output words to
  a same input if two such free identifiers exist.
\end{example}
\begin{figure}[htb]
  \centering
\begin{tikzpicture}[->,>=stealth',auto,node distance=2cm,thick,scale=0.9,every node/.style={scale=0.85}]
\tikzstyle{every state}=[text=black, font=\scriptsize, fill=yellow!30,minimum size=7.5mm]

\tikzstyle{input char}=[text=Red3]
\tikzstyle{output char}=[text=Green4]

\node[state, initial, initial text={}]    (i) {1};
\node[state, right=of i]                  (p) {2};
\node[state, right=of p]                  (q) {3};
\node[state, right=of q, accepting]       (r) {4};

\path (i) edge node {$\mathsf{del},\top\mid r_1,\epsilon $} (p); \path (p)
 edge[loop above] node {$\mathsf{ch},\top \mid \varnothing,\epsilon $} (p);
 \path (p) edge node {$\mathsf{ch},r_1^{\neq} \mid r_2,\epsilon $} (q); \path
 (q) edge[loop above] node {$\mathsf{ch},\top \mid \varnothing,\epsilon $} (q);
 \path (q) edge node {$\#,\top\mid \varnothing,\epsilon $} (r); \path (r)
 edge[loop above] node {$\sigma,r_1^= \mid \varnothing,(\sigma,r_2) $} (r);
 \path (r) edge[loop below] node {$\sigma,r_1^{\neq} \wedge r_2^{\neq} \mid
   r_0,(\sigma, r_0) $} (r);
\end{tikzpicture}
  \caption{A register transducer $T_\mathsf{rename}$. It has three
    registers $r_1$, $r_2$ and $r_0$ and four states.
    $\sigma$ denotes any letter in $\Sigma$, $r_1$ stores the id of
    $\mathsf{del}$ and $r_2$ the chosen id of $\mathsf{ch}$, while $r_0$ is used
    to output the last data value read as input. As we only assign data to
    single registers, we write $r_i$ for the singleton assignment set $\{r_i\}$.}
  \label{fig:Trename}
\end{figure}
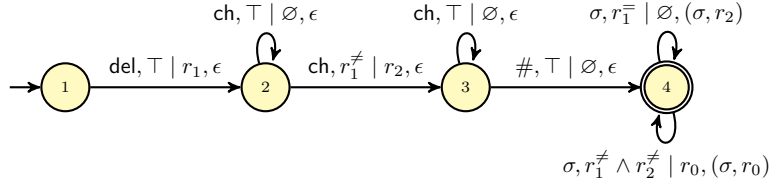

\myparagraph{Finite Transducers} Since we reduce the decision of continuity and
functionality of \nrt to the one of finite transducers, let us introduce them: a
finite transducer (\nft for short) is an \nrt with 0 registers (i.e. $R =
\varnothing$). Thus, its transition relation can be represented as $\Delta
\subseteq Q \times \Sigma \times \Gamma^* \times Q$.
A direct extension of the construction of~\cite[Proposition
1]{Kaminski:1994:FA:194527.194534}
allows to show that:
\begin{proposition}
  \label{prop:regularRestrictionTrd}
  Let $T$ be an \nrt with $k$ registers, and let $X \subset_f \data$ be a finite
  subset of data. Then, $\rel{T} \cap (\Sigma \times X)^\omega \times (\Gamma
  \times X)^\omega$ is recognised by an \nft of exponential size, more precisely
  with $O(\size{Q} \times \size{X}^{\size{R}})$ states.
\end{proposition}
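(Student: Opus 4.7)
The plan is to carry out the standard product construction that encodes the register valuation into the state, adapted from the original Kaminski--Francez construction for register automata. Once the data range is restricted to $X$, there are only $|X|^{|R|}$ possible register valuations $\tau : R \to X$, so we can make the valuation explicit in the finite control.

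Concretely, I would define an \nft $T' = (Q', i_0', F', \Delta')$ over the finite input alphabet $\Sigma \times X$ and finite output alphabet $\Gamma \times X$ by taking $Q' = Q \times (R \to X)$, $i_0' = (i_0, \tau_0^R)$, and $F' = F \times (R \to X)$. For every transition $q \myxrightarrow[T]{\sigma,\phi \mid \asgn, o} q'$ of $T$, every state $(q,\tau) \in Q'$, and every data $d \in X$ such that $\tau, d \models \phi$, I would introduce a transition $(q,\tau) \myxrightarrow[T']{(\sigma,d) \mid w} (q',\tau')$ where $\tau'$ is the update of $\tau$ by $d$ on $\asgn$, and $w \in (\Gamma \times X)^*$ is obtained from $o = (\gamma_1, r_1) \dots (\gamma_m, r_m)$ by instantiating each register with its value under $\tau'$, namely $w = (\gamma_1, \tau'(r_1)) \dots (\gamma_m, \tau'(r_m))$. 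The satisfiability test $\tau, d \models \phi$ is purely Boolean once $\tau$ and $d$ are fixed, and both $w$ and $\tau'$ are deterministic functions of the chosen transition, the state $(q,\tau)$, and $d$, so $\Delta'$ is well-defined and finite.

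The correctness argument is a straightforward bijection between runs: an accepting run of $T$ whose input and output both lie in $(\cdot \times X)^\omega$ visits only configurations $(q_i, \tau_i)$ with $\tau_i(r) \in X$ for every $r$ (here we use that $\dz \in X$ may be assumed, or otherwise the initial valuation's image is in $X$ precisely when $\dz \in X$; if not, no restricted run exists and $T'$ can be taken empty). Each such configuration is exactly a state of $T'$, and the transition relations are designed to match. The B\"uchi condition transfers directly since $F' = F \times (R \to X)$, so $\mathrm{inf}(\states(\rho)) \cap F \neq \varnothing$ in $T$ iff the same holds in $T'$ for the lifted run. Hence $\rel{T'} = \rel{T} \cap (\Sigma \times X)^\omega \times (\Gamma \times X)^\omega$.

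Finally, the state count is $|Q'| = |Q| \cdot |X|^{|R|}$ by construction, yielding the advertised $O(|Q| \cdot |X|^{|R|})$ bound. I do not expect any significant obstacle: the only subtlety is to make sure the output word $w$ is built using the \emph{updated} valuation $\tau'$ (as in the semantics of \nrt in \autoref{sec:prelims}), so that the outputs produced by $T'$ agree letter-by-letter with those of $T$.
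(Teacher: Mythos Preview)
Your construction is exactly the ``direct extension of the construction of \cite[Proposition~1]{Kaminski:1994:FA:194527.194534}'' that the paper invokes without further detail, so the approaches coincide. One small imprecision: when $\dz \notin X$ it is \emph{not} true that no restricted run exists (registers holding $\dz$ might simply never be output), but the fix is immediate---take the state space to be $Q \times (R \to X \cup \{\dz\})$, which still has $O(|Q|\cdot|X|^{|R|})$ states.
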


\subsection{Technical Properties of Register Automata}
Although automata are simpler machines than transducers, we only use them
as tools in our proofs, which is why we define them from transducers, and not
the other way around. A non-deterministic register automaton, denoted \nra, is a
transducer without outputs: its transition relation is $\Delta \subseteq Q
\times \Sigma \times \Tst_R \times 2^R \times \{\varepsilon\} \times Q$ (simply
represented as $\Delta \subseteq Q \times \Sigma \times \Tst_R \times 2^R \times
Q$). The semantics are the same, except that now we lift the condition that the
output $v$ is infinite since there is no output. For $A$ an \nra, we denote
$L(A) = \{u \in (\Sigma \times \data)^\omega \mid\ $there exists an accepting
run $\rho$ of $A$ over $u\}$. Necessarily the output of an accepting run is
$\varepsilon$. In this section, we establish technical properties about \nra.

\autoref{prop:indistinguishability}, the so-called
``indistinguishability property'', was shown in the seminal paper by Kaminski
and Francez~\cite[Proposition 1]{Kaminski:1994:FA:194527.194534}. Their model
differs in that they do not allow distinct registers to contain the same data,
and in the corresponding test syntax, but their result easily carries to our
setting.
It states that if an \nra accepts a data word, then such data word can be
relabelled with data from any set containing $\dz$ and with at least $k+1$
elements. Indeed, at any point of time, the automaton can
only store at most $k$ data in its registers, so its notion of ``freshness'' is
a local one, and forgotten data can thus be reused as fresh ones. Moreover, as
the automaton only tests data for equality, their actual value does not matter,
except for $\dz$ which is initially contained in the registers.

Such ``small-witness'' property is fundamental to \nra, and will be paramount in
establishing decidability of functionality (\autoref{sec:functions}) and of
computability (\autoref{sec:computability-continuity}). We use it jointly with
\autoref{lem:otimes}, which states that the interleaving of the traces of runs
of an \nrt can be recognised with an \nra, and \autoref{lem:mismatch}, which
expresses that an \nra can check whether interleaved words coincide on some
bounded prefix, and/or mismatch before some given position.
\begin{proposition}[\cite{Kaminski:1994:FA:194527.194534}]
  \label{prop:indistinguishability}
  Let $A$ be an \nra with $k$ registers. If $L(A) \neq \varnothing$, then, for
  any $X \subseteq \data$ of size $\size{X} \geq k+1$ such that $\dz \in X$,
  $L(A) \cap (\Sigma \times X)^\omega \neq \varnothing$.
\end{proposition}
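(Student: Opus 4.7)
The plan is, starting from any accepting run $\rho$ of $A$ on some word $w\in L(A)$, to build a data word $w'\in(\Sigma\times X)^\omega$ and an accepting run $\rho'$ of $A$ on $w'$ visiting the \emph{same sequence of states} as $\rho$. Since $\rho$ satisfies the B\"uchi condition, so does $\rho'$, witnessing $w'\in L(A)\cap(\Sigma\times X)^\omega$. The key observation is that register tests only depend on the equality pattern between the current datum and the register contents, so it suffices to relabel $w$ while preserving this pattern along the run.

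Write $\rho=(q_0,\tau_0)\xrightarrow{(\sigma_1,d_1)}(q_1,\tau_1)\xrightarrow{(\sigma_2,d_2)}\cdots$ with $\tau_0=\tau_0^R$. I would construct, by induction on $i$, data $d'_i\in X$ and valuations $\tau'_i:R\to X$ forming a valid partial run $(q_0,\tau'_0)\xrightarrow{(\sigma_1,d'_1)}(q_1,\tau'_1)\cdots\xrightarrow{(\sigma_i,d'_i)}(q_i,\tau'_i)$ of $A$, maintaining the invariant that for all $r,r'\in R$, $\tau_i(r)=\tau_i(r')$ iff $\tau'_i(r)=\tau'_i(r')$. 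The base case $\tau'_0=\tau_0^R$ is well-defined precisely because $\dz\in X$. Under this invariant, every test $\tst\in\Tst_R$ satisfies $\tau_{i-1},d_i\models\tst$ iff $\tau'_{i-1},d'_i\models\tst$ whenever the pairs $(d_i,\tau_{i-1})$ and $(d'_i,\tau'_{i-1})$ realise the same equality pattern, i.e.\ $d_i=\tau_{i-1}(r)\Leftrightarrow d'_i=\tau'_{i-1}(r)$ for every $r\in R$.

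For the inductive step, suppose $\rho$ takes the transition $q_{i-1}\xrightarrow{\sigma_i,\tst\mid\asgn,\varepsilon}q_i$. If $d_i=\tau_{i-1}(r)$ for some register $r$, set $d'_i:=\tau'_{i-1}(r)$; the induction hypothesis then yields the required equality pattern. Otherwise $d_i\notin\tau_{i-1}(R)$, and I pick any $d'_i\in X\setminus\tau'_{i-1}(R)$, which is nonempty since $\size{\tau'_{i-1}(R)}\leq\size{R}=k$ and $\size{X}\geq k+1$. In both cases the transition is enabled in the new run, and updating $\tau'_i$ to agree with $\tau'_{i-1}$ outside $\asgn$ and with $d'_i$ on $\asgn$ preserves the invariant by a routine case check on pairs of registers. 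The resulting run on $w'=(\sigma_1,d'_1)(\sigma_2,d'_2)\cdots$ shares the state sequence of $\rho$, hence is accepting. I do not foresee any real obstacle: the construction is purely combinatorial, relying only on the fact that the behaviour of an \nra on a data word is determined up to equality-pattern isomorphism, anchored by the fixed value $\dz$ initialising the registers, which is exactly why the hypothesis $\dz\in X$ is necessary.
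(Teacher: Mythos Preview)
Your argument is correct and is precisely the standard proof of the indistinguishability property: relabel the data along an accepting run so as to preserve, at each step, the equality type between the current datum and the register contents, using $\size{X}\geq k+1$ to always find a fresh value and $\dz\in X$ to initialise. The paper does not spell out a proof of this proposition but cites Kaminski and Francez, whose argument (modulo the syntactic differences in the register model) is exactly the one you give.
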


The runs of a register transducer $T$ can be flattened to their traces, so as to be
recognised by an \nra. Those traces can then be interleaved, in order to be
compared. The proofs of the following properties are quite straightforward, but
are given for the sake of completeness.

  Let $\rho_1 = (q_0,\tau_0) \myxrightarrow[L_T]{(u_1, u'_1)} (q_1,\tau_1)
  \dots$ and $\rho_2 = (p_0,\mu_0) \myxrightarrow[L_T]{(v_1, v'_1)} (p_1,\mu_1)
  \dots$ be two runs of a transducer $T$. Then, we define their
  \emph{interleaving} $\rho_1 \otimes \rho_2 = u_1 \cdot u'_1 \cdot v_1 \cdot
  v'_1 \cdot u_2 \cdot u'_2 \cdot v_2 \cdot v'_2 \dots$ and $L_\otimes(T) = \{\rho_1
  \otimes \rho_2 \mid \rho_1 \text{ and } \rho_2 \text{ are accepting runs
    of } T\}$. 
\begin{lemma}
  \label{lem:otimes}
  If $T$ has $k$ registers, then $L_\otimes(T)$ is recognised by an \nra with
  $2k$ registers.
\end{lemma}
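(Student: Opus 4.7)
The idea is to build an \nra $B$ that reads the interleaved word and simulates both runs of $T$ in lockstep with the interleaving pattern, using two disjoint copies $R_1$ and $R_2$ of the register set $R$ (hence $2k$ registers in total), one tracking the register valuation of $\rho_1$ and the other that of $\rho_2$. The \nra operates over the combined alphabet $\Sigma \cup \Gamma$: input positions of either run contribute a symbol in $\Sigma \times \data$, while output positions contribute symbols in $\Gamma \times \data$. The fact that the interleaving template $u_1 u_1' v_1 v_1' u_2 u_2' v_2 v_2' \cdots$ is rigid means that $B$ can deterministically schedule, in its control state, which of the two simulated runs is currently reading an input symbol and which is emitting outputs.

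\textbf{States and transitions.} I would take the state space of $B$ to be a finite set of tuples $(q_1, q_2, \mathit{mode})$, where $q_1, q_2 \in Q$ are the current states of the two simulated copies of $T$ and $\mathit{mode}$ encodes (i) which of the four phases $u_i$, $u_i'$, $v_i$, $v_i'$ is being processed, and (ii) in output phases, the suffix of $(\Gamma \times R)^*$ still to be emitted from the last guessed transition. Since a single transition of $T$ has a bounded output length, the set of such suffixes that can ever appear is finite (bounded by the size of the transition table of $T$), so this state space remains finite. When $B$ is in an input phase for run $j \in \{1,2\}$ with current state $q_j$, it nondeterministically guesses a transition $q_j \myxrightarrow{\sigma, \tst \mid \asgn, o} q_j'$ of $T$, reads one symbol $(\sigma', d)$ with $\sigma' = \sigma$, checks the test $\tst$ against register copy $R_j$, updates $R_j$ according to $\asgn$, and moves to an output phase whose state component remembers $o$. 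In an output phase with pending output $(\gamma_1, r_1)\cdots(\gamma_m, r_m)$, $B$ reads a single letter $(\gamma', d')$, checks $\gamma' = \gamma_1$ and fires a transition whose test is $r_1^=$ evaluated over copy $R_j$ (leaving both register copies unchanged), pops the first pair, and if the remaining list is empty transitions to the next input phase; the case $m=0$ is handled by an $\varepsilon$-free direct transition to the next phase's first input symbol.

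\textbf{Acceptance.} To enforce that both simulated runs are accepting B\"uchi runs of $T$, I would use the standard product trick: augment the state space with a flag $f \in \{1,2\}$ that flips to $2$ when $q_1 \in F$ (resp.\ flips back to $1$ when $q_2 \in F$), and declare the B\"uchi accepting set of $B$ to consist of those states with $f = 2$ and $q_2 \in F$ (or any equivalent encoding). Infinitely many visits to this set in $B$ force infinitely many visits to $F$ by both $q_1$ and $q_2$, matching the B\"uchi condition on $\rho_1$ and $\rho_2$.

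\textbf{Correctness and obstacle.} Soundness and completeness of $B$ follow directly from the simulation: an accepting run of $B$ on a word $w$ can be decoded into a pair of accepting runs of $T$ whose interleaving equals $w$, and conversely every interleaving $\rho_1 \otimes \rho_2$ of accepting runs of $T$ yields an accepting run of $B$ by simply mirroring each transition of $\rho_1$ and $\rho_2$ in the corresponding phase. There is no real mathematical obstacle here; the construction is routine. The only point requiring mild care is to ensure that the two register copies are truly independent, so that tests and assignments performed while simulating one run cannot leak into the other: this is guaranteed because every transition of $B$ uses only tests and assignments over the single copy $R_j$ corresponding to the active phase, leaving $R_{3-j}$ untouched.
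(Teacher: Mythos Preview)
Your proposal is correct and follows essentially the same approach as the paper: maintain two disjoint copies of the registers to simulate the two runs of $T$, fire the input-side tests and assignments on the appropriate copy when reading a $\Sigma$-letter, and verify each output letter against the corresponding register via an equality test. Your write-up is in fact more detailed than the paper's (which is a brief sketch): you spell out the finite bookkeeping of pending output suffixes and the standard flag construction for the product B\"uchi condition, neither of which the paper makes explicit.
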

\begin{proof}
  We construct an automaton $A$ with $2k$ registers which, on an input data word
  $u\in ((\Sigma\times \data)(\Gamma\times \data)^*)^\omega$ guesses, using
  non-determinism, a decomposition $u = u_1 \cdot u'_1 \cdot v_1 \cdot v'_1
  \cdot u_2 \cdot u'_2 \cdot v_2 \cdot v'_2 \dots$ as defined before. In
  particular for all $i$, $|u_i| = |v_i| = 1$. The automaton $A$ also maintains
  two copies of $T$ (that is why $A$ needs $2k$ registers). The first one makes
  sure that $u_1u'_1u_2u'_2\dots = \trace(\rho_1)$ for some run $\rho_1$, the
  other one checks that $v_1v'_1v_2v'_2\dots = \trace(\rho_2)$ for some run
  $\rho_2$. To check those two properties, $A$ simulates $T$ on both
  decompositions, using the two copies of $T$. For the input part (when reading
  $u_1$ or $v_1$), $A$ simply makes the same tests as $T$ does. For the output
  part, if $A$ simulates a transition of $T$ with output $o =
  (\sigma_1,r_1)\dots (\sigma_n,r_n)$ on some output $v_i$, it makes $n$
  transitions which successively check that $r_j$ is equal to the current data,
  for all $1\leq j\leq n$.
\end{proof}

\begin{lemma}
  \label{lem:mismatch}
  Let $i,j \in \N \cup \{\infty\}$.
  We define $M_j^i = \{u_1 u'_1 v_1 v'_1 \dots \mid \forall k \geq 1, u_k, v_k
  \in (\Sigma \times \data),
  u'_k,v'_k \in (\Gamma \times \data)^*,
  \forall 1 \leq k \leq j, v_k = u_k \text{ and } \length{u'_1 \cdot u'_2 \dots
    \wedge v'_1 \cdot v'_2 \dots} \leq i \}$. Then, $M_j^i$ is recognisable by
  an \nra with $2$ registers and with $1$ register if $i = \infty$.
\end{lemma}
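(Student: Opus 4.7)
The plan is to construct a non-deterministic register automaton that uses its finite control to parse the input word as the interleaving $u_1 u'_1 v_1 v'_1 u_2 u'_2 v_2 v'_2 \dots$, non-deterministically guessing the (a priori unbounded) boundaries between the $u'_k / v_k$ phases and the $v'_k / u_{k+1}$ phases. The two conditions in the definition of $M_j^i$ are enforced in parallel, each consuming one of the two registers.

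The first condition, that $u_k = v_k$ for $1 \leq k \leq j$, will use one register $r_1$ together with a bounded counter $k \leq j$ in the state: upon entering phase $u_k$ for $k \leq j$, the automaton stores $\dt(u_k)$ in $r_1$ and $\lab(u_k)$ in its finite state, then after guessing the end of $u'_k$ it reads $v_k$ and verifies both conditions using the state (for the label) and the test $r_1^=$ (for the data). Because $r_1$ is only ``live'' between $u_k$ and the matching $v_k$, it can be overwritten at the next round and reused for every $k \leq j$; the symbols $u_k, v_k$ for $k > j$ are read without constraints.

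The second condition, the mismatch bound $\length{u'_1 u'_2 \dots \wedge v'_1 v'_2 \dots} \leq i$ (for finite $i$), will use the other register $r_2$ together with two state counters $c_u, c_v \leq i+1$ that track how many symbols of each output stream have been consumed so far. The automaton non-deterministically guesses the exact position $\ell \leq i+1$ at which a mismatch occurs and stores it in the state; when $c_u$ first reaches $\ell$ it records the current symbol (label in the state, data in $r_2$), and when $c_v$ first reaches $\ell$ it verifies that the current symbol differs, either through a label disagreement in the finite control or via the test $r_2^{\neq}$. The main technical subtlety I anticipate is handling the case in which the common prefix is short because one of the two output streams (say $u'_1 u'_2 \dots$) is \emph{strictly shorter} than the other: this case must be witnessed by guessing in advance that the stream will emit no further symbols and enforcing this as a liveness property via the B\"uchi condition on states certifying that the silent stream never again contributes an output symbol.

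When $i = \infty$ the mismatch constraint is vacuously satisfied, so the guess of $\ell$, the state counters $c_u, c_v$, and the register $r_2$ can all be dropped, leaving an \nra with only one register, as claimed.
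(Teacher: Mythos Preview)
Your proposal is correct and follows essentially the same construction as the paper: one register plus a $j$-bounded counter to enforce $u_k = v_k$, and a second register plus two $i$-bounded counters to guess a mismatch position, store the first-seen symbol, and compare against the other stream. The paper phrases the latter part as ``guess whether the mismatch position first appears in $u'$ or $v'$'' (which resolves the slight asymmetry in your wording about $c_u$ reaching $\ell$ before $c_v$), and it does not explicitly treat the short-stream case you mention, but otherwise the two arguments coincide.
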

\begin{proof}
  Checking that $\forall k \geq 1, u_k,v_k \in (\Sigma \times \data),
  u'_k,v'_k \in (\Gamma \times \data)^*$
  is a regular property, since there are no constraints over the data. Now,
  checking that $\forall 1 \leq k \leq j, v_k = u_k$ requires a $j$-bounded
  counter (or no counter if $j = \infty$) which can be stored in memory, along
  with one register to store $\dt(u_k)$. Now, if $i = \infty$, there is no
  additional property to check. If $i < \infty$, checking that $\length{u'_1 \cdot
    u'_2 \dots \wedge v'_1 \cdot v'_2 \dots} \leq i \}$ requires two $i$-bounded
  counters to keep track of the respective lengths of the prefixes of $u' = u'_1
  \cdot u'_2 \dots$ and $v' = v'_1 \cdot v'_2 \dots$ which have been read so
  far, along with one register: the automaton guesses whether the mismatch
  position first appears in $u'$ or $v'$, stores the corresponding data in its
  additional register, as well as the label in its memory, goes to the
  corresponding position in $v'$ or $u'$ with the help of its $i$-bounded
  counters and checks that either the data or the label indeed mismatches.
\end{proof}

\section{Functionality, Equivalence and Composition of NRT}
\label{sec:functions}

In general, since they are non-deterministic, \nrt may not define functions but
relations, as illustrated by \autoref{ex:Trename}. In this section, we first
show that deciding whether a given \nrt defines a function is \PSpace-complete,
in which case we call it \emph{functional}. We show, as a consequence, that
testing whether two functional \nrt define two functions which coincide on their
common domain is \PSpace-complete. Finally, we show that functions defined by
\nrt are closed under composition. This is an appealing property in transducer
theory, as it allows to define complex functions by composing simple ones.
\begin{example}\label{ex2}
  As explained before, the transducer $T_\mathsf{rename}$ described in
  \autoref{ex:Trename} is not functional. To gain functionality, one can
  reinforce the specification by considering that one gets at the beginning a
  list of $k$ possible identifiers, and that one has to select the first one
  which is free, for some fixed $k$. This transformation is realised by the
  register transducer $T_\mathsf{rename2}$ depicted in \autoref{fig:Trename2}
  (for $k=2$).
\end{example}
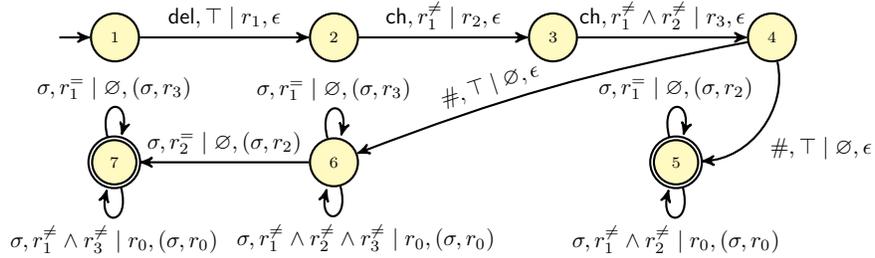
\begin{figure}[htb]
  \centering
  \begin{tikzpicture}[->,>=stealth',auto,node
    distance=2.25cm,thick,scale=0.9,every node/.style={scale=0.85}]
    \tikzstyle{every state}=[text=black, font=\scriptsize,
    fill=yellow!30,minimum size=7.5mm]

    \node[state, initial, initial text={}] (i) {1};
    \node[state, right=of i] (p) {2};
    \node[state, right=of p] (q) {3};
    \node[state, right=of q] (r) {4};
    \node[state, below=1cm of p] (u) {6};
    \node[state, left=of u,accepting] (v) {7};
    \node[state, accepting,xshift=-1.5cm] at (u-|r) (t) {5};

    \path (i) edge node {$\mathsf{del},\top \mid r_1,\epsilon $} (p);
    \path (p) edge node {$\mathsf{ch},r_1^{\neq} \mid r_2,\epsilon $} (q);
    \path (q) edge node {$\mathsf{ch},r_1^{\neq} \wedge r_2^{\neq} \mid r_3,\epsilon $} (r);
    \path (r) edge [bend left=50] node {$\#,\top \mid \varnothing,\epsilon$} (t);
    \path (r) edge[bend right=5] node[above,sloped,pos=0.65] {$\#,\top \mid \varnothing,\epsilon$} (u);
    \path (u) edge node[above] {$\sigma,r_2^{=} \mid \varnothing,(\sigma,r_2) $} (v);
    \path (t) edge[loop above] node {$\sigma,r_1^{=} \mid \varnothing,(\sigma,r_2) $} (t);
    \path (t) edge[loop below] node {$\sigma,r_1^{\neq} \wedge r_2^{\neq} \mid r_0,(\sigma, r_0) $} (t);
    \path (u) edge[loop above] node {$\sigma,r_1^{=} \mid \varnothing,(\sigma,r_3) $}
    (u);
    \path (u) edge[loop below] node[xshift=0.5cm]{$\sigma,r_1^{\neq} \wedge r_2^{\neq} \wedge r_3^{\neq} \mid r_0,(\sigma, r_0) $} (u);

    \path (v) edge[loop above] node {$\sigma,r_1^{=} \mid \varnothing,(\sigma,r_3) $}
    (v);
    \path (v) edge[loop below] node {$\sigma,r_1^{\neq} \wedge r_3^{\neq} \mid r_0,(\sigma, r_0) $} (v);
  \end{tikzpicture}
  \caption{A \nrt $T_\mathsf{rename2}$, with four registers $r_1, r_2, r_3$ and
    $r_0$ (the latter being used, as in \autoref{fig:Trename}, to output the last read data). After reading the $\#$ symbol, it guesses whether the value of
    register $r_2$ appears in the suffix of the input word. If not, it goes to
    state $5$, and replaces occurrences of $r_1$ by $r_2$. Otherwise, it moves
    to state $6$, waiting for an occurrence of $r_2$, and replaces occurrences
    of $r_1$ by $r_3$.}\label{fig:Trename2}
\end{figure}

\subsection{Functionality}

Let us start with the functionality problem in the data-free case. It is already
known that checking whether an \nft over $\omega$-words is functional is
decidable~\cite{DBLP:journals/iandc/CulikP81,DBLP:journals/ipl/Gire86}. By
relying on the pattern logic of~\cite{DBLP:conf/dlt/FiliotMR18} designed for
transducers of \emph{finite} words, we can show that it is decidable in
\NLogSpace.
\begin{proposition}
  \label{prop:funfinite}
  Deciding whether an \nft is functional is in \NLogSpace.
\end{proposition}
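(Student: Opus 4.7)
The plan is to extend the pattern-logic approach of~\cite{DBLP:conf/dlt/FiliotMR18} (designed for finite-word transducers) to the B\"uchi setting, and to reduce non-functionality of an \nft $T$ to a polynomial-size witness that a nondeterministic logspace machine can guess and verify on the fly. Non-functionality means there are two accepting runs $\rho_1,\rho_2$ on a common input $u \in \Sigma^\omega$ with $\projout(\rho_1)\neq\projout(\rho_2)$; since both outputs are infinite, they must disagree at some finite position. A natural witness is therefore a finite shared input prefix $u'$ on which partial runs reach some state pair $(p_1,p_2)$ and already produce a positional output mismatch, together with a certificate that $p_1$ and $p_2$ can each be extended into a B\"uchi-accepting continuation on some common infinite suffix.

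The first step is to establish a small-witness property: if $T$ is non-functional, a witness of the above form exists in which the length of $u'$, the mismatch position in the output, and the lasso certifying B\"uchi acceptance from each $p_i$ are all polynomial in $\size{Q}$. This would follow by a pumping argument on the product $T \times T$ reading the same input: any longer candidate witness can be shortened by excising loops that contribute neither to the mismatch nor to visits of accepting states on either side. The finite-word pattern logic of~\cite{DBLP:conf/dlt/FiliotMR18} encapsulates the polynomial bound needed for the mismatch part; the B\"uchi extension is a standard side-wise acceptance treatment, mirroring the \NLogSpace emptiness of B\"uchi automata.

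The \NLogSpace procedure then nondeterministically guesses and verifies such a witness. It maintains in logarithmic space the current pair of states of the product, two polynomially bounded counters tracking how many output symbols each copy has produced, a guessed target output position $i$ at which the mismatch is claimed to occur, and, upon each copy reaching position $i$, a single buffered output symbol for final comparison. After witnessing the mismatch, it verifies B\"uchi acceptance of each copy independently via the standard nondeterministic ``reach an accepting state, then loop back to it'' lasso search. The main obstacle is the potentially unbounded output lag between the two copies on the same input prefix: naively keeping their output streams in sync requires more than logarithmic memory. I would bypass this by pre-guessing the absolute output position of the mismatch, which decouples the two sides and reduces the memory requirement to polynomially bounded counters; the correctness of this decoupling is precisely what the small-witness property lifted from~\cite{DBLP:conf/dlt/FiliotMR18} guarantees.
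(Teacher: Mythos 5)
Your reduction is essentially the paper's: non-functionality of an \nft is witnessed by two runs over a common finite input prefix that reach a pair of states co-accessible by a common $\omega$-word and whose outputs mismatch at some position, and the resulting finite-word pattern is verified in \NLogSpace by appealing to~\cite{DBLP:conf/dlt/FiliotMR18} for the polynomial bound on the witness (and hence on the mismatch position and your counters), exactly as the paper does. The one point you must make precise is the acceptance check after the mismatch: the two B\"uchi continuations have to be certified on a \emph{single} shared lasso in the product, with an accepting state of each copy occurring somewhere on the common loop (this is what the paper's $\textsf{coacc}(p_1,p_2)$ formula encodes), not by two independent per-copy lasso searches, since the two runs must continue over the same input suffix.
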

\begin{proof}
  Let $T$ be some \nft. By definition, $\rel{T}$ maps $\omega$-words to
  $\omega$-words, we have that $T$ is \emph{not} functional iff there exist
  three $\omega$-words $w,w_1,w_2$ such that $(w,w_1)\in \rel{T}$, $(w,w_2)\in
  \rel{T}$ and there is a mismatch between $w_1$ and $w_2$, i.e. a position $i$
  such that $w_1[i]\neq w_2[i]$. By taking a sufficiently long prefix $u$ of
  $w$, the latter is equivalent to the existence of finite runs
  $r_1:q_1\xrightarrow{u|v_1} p_1$ and $r_2:q_2\xrightarrow{u|v_2} p_2$, where
  $q_1,q_2$ are initial, $p_1,p_2$ are co-accessible by the same $\omega$-words
  (in the sense that there exist two accepting runs from $p_1$ and $p_2$ on the
  same $\omega$-word), and such that there is a mismatch between $v_1$ and
  $v_2$. This property is expressible in the pattern logic
  of~\cite{DBLP:conf/dlt/FiliotMR18} for transducers of finite words, whose
  model-checking problem is in \NLogSpace. More precisely, we model-check
  against $T$ the following pattern formula (using the syntax
  of~\cite{DBLP:conf/dlt/FiliotMR18}):
  \begin{align*}
    & \exists \pi_1:q_1\xrightarrow{u|v_1} p_1\exists \pi_2:q_2\xrightarrow{u|v_2}
      p_2 \\
    & \textsf{init}(q_1)\wedge \textsf{init}(q_2)\wedge
      \textsf{coacc}(p_1,p_2)\wedge \textsf{mismatch}(v_1,v_2)
  \end{align*}
  The predicate $\textsf{coacc}(p_1,p_2)$ is not directly defined
  in~\cite{DBLP:conf/dlt/FiliotMR18}, which is a logic for transducers over
  finite words, but we show that it is easily definable in the pattern logic.
  Indeed, the property of $(p_1,p_2)$ to be co-accessible by the same
  $\omega$-word is equivalent to asking the existence of finite runs $r'_1,r'_2$
  of the following form, where the outputs have not been indicated as they do
  not matter for this property:
  $$
  r'_1\ :\ p_1\xrightarrow{u_1} p'_1\xrightarrow{u_2} p''_1\xrightarrow{ u_3}
  p'_1\qquad r'_2\ :\ p_2\xrightarrow{u_1} p'_2\xrightarrow{u_2}
  p''_2\xrightarrow{ u_3} p'_2
  $$
  such that $p'_1,p''_2$ are accepting states. Indeed, if $(p_1,p_2)$ is
  co-accessible and since we use a B\"uchi accepting condition, it is
  co-accessible by a lasso word $u_1(u_2u_3)^\omega$ on which there is a run
  from $p_1$ with some accepting states $p'_1$ and a run from $p_2$ with some
  accepting state $p''_2$, both occurring on the loop of the lasso. Therefore,
  $\textsf{coacc}(p_1,p_2)$ holds true iff the following pattern logic formula
  is true against $T$ (seen as a transducer over finite words):
  \begin{align*}
    & \exists \pi'_1:p_1\xrightarrow{u_1}p'_1\exists 
      \pi''_1:p'_1\xrightarrow{u_2}p''_1 \exists 
      \pi'''_1:p''_1\xrightarrow{u_3}p'_1 \\
    & \exists \pi'_2:p_2\xrightarrow{u_1}p'_2\exists 
      \pi''_2:p'_2\xrightarrow{u_2}p''_2 \exists 
      \pi'''_2:p''_2\xrightarrow{u_3}p'_2 \\
    & \textsf{final}(p'_1)\wedge \textsf{final}(p''_2)
  \end{align*}
  
  This concludes the proof.
\end{proof}
The following theorem shows that a relation between data-words defined by an
\nrt with $k$ registers is a function iff its restriction to a set of data with
at most $2k+3$ data is a function. As a consequence, functionality is decidable
as it reduces to the functionality problem of transducers over a finite
alphabet.
\begin{theorem}
  \label{thm:functionality}
  Let $T$ be an \nrt with $k$ registers. Then, for all $X \subseteq \data$ of
  size $\size{X} \geq 2k+3$ such that $\dz \in X$, we have that $T$ is 
  functional if and only if $\rel{T} \cap ((\Sigma \times X)^\omega \times
  (\Gamma \times X)^\omega)$ is functional.
\end{theorem}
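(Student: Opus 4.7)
The plan is to prove the two directions of the equivalence separately. The easier one is immediate; the harder one is by contraposition, combining the indistinguishability property (\autoref{prop:indistinguishability}) with the encoding of two simultaneous runs of $T$ as a single data word recognised by an \nra.

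If $T$ is functional then any restriction of $\rel{T}$ is the restriction of a function, hence still functional, which gives the easy direction.

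For the converse I argue contrapositively. Suppose $T$ is not functional, so there exist two accepting runs $\rho_1, \rho_2$ of $T$ on a common input word $u$ with respective outputs $v_1, v_2$ satisfying $\mismatch(v_1, v_2)$. The goal is to produce such witness runs whose data all lie in $X$. By \autoref{lem:otimes}, the set of interleavings $\rho_1 \otimes \rho_2$ of pairs of accepting runs of $T$ is recognised by an \nra $A_\otimes$ with $2k$ registers. I would then intersect this with an auxiliary \nra enforcing (a) that the input letters of $\rho_1$ and $\rho_2$ coincide position by position, and (b) that the concatenated outputs of $\rho_1$ and $\rho_2$ exhibit a mismatch at some position. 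Condition (a) costs $1$ register via the $j = \infty$ case of \autoref{lem:mismatch}; condition (b) can be enforced with $1$ further register by non-deterministically guessing the mismatching datum, using the alternating block structure of the interleaving to sidestep the need for unbounded counters. The resulting \nra $A$ therefore has at most $2k + 2$ registers, and by construction $L(A) \neq \varnothing$ iff $T$ admits two accepting runs on a common input with mismatching outputs, i.e.\ iff $T$ is not functional.

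I then apply \autoref{prop:indistinguishability} to $A$: since $L(A) \neq \varnothing$ and $A$ has $2k + 2$ registers, $L(A)$ contains a word whose data are all drawn from any fixed $X \subseteq \data$ with $\dz \in X$ and $|X| \geq 2k + 3$. Decoding such a word into its two underlying runs of $T$ yields two accepting runs using only data from $X$, sharing the same input in $(\Sigma \times X)^\omega$ and producing mismatching outputs in $(\Gamma \times X)^\omega$. This contradicts the functionality of $\rel{T} \cap ((\Sigma \times X)^\omega \times (\Gamma \times X)^\omega)$, finishing the contrapositive.

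The main technical obstacle I anticipate is the construction of the output-mismatch component of the auxiliary \nra within the one-extra-register budget: \autoref{lem:mismatch} in its stated form only handles mismatch within a \emph{fixed} prefix length $i$, whereas here the mismatch position is a priori unbounded. The idea is to guess the mismatch non-deterministically on the fly, storing the candidate mismatching datum in the extra register and exploiting the fact that in the interleaving the outputs of $\rho_1$ and $\rho_2$ come in synchronised blocks (one block per input position), so that alignment can be tracked using the finite control of the \nra rather than unbounded counters. Pinning down exactly $2k + 2$ registers for the final \nra is what produces the bound $|X| \geq 2k + 3$ stated in the theorem.
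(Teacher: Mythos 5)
Your overall architecture is the same as the paper's (easy direction trivial; contrapositive via $L_\otimes(T)$, an input-equality component, an output-difference component, then \autoref{prop:indistinguishability} with $2k+2$ registers), but the output-mismatch component as you describe it does not work, and this is a genuine gap rather than a detail. The obstacle you correctly identify --- that the mismatch position is a priori unbounded --- is not resolved by the block structure of the interleaving. Because the transducer is asynchronous, the output blocks $u'_k$ and $v'_k$ attached to the same input position $k$ may have different lengths, so the offset between ``number of output symbols of $\rho_1$ seen so far'' and ``number of output symbols of $\rho_2$ seen so far'' can drift without bound. Locating \emph{the same absolute position} $p$ in both output streams therefore requires comparing two unbounded counts, which neither the finite control nor an extra register (registers store data, not counts) can do; guessing the mismatching datum and storing it in a register does not help, since you still cannot certify that the position where you re-read it in the other stream is the \emph{same} position $p$. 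This is essentially the non-regularity of $\{u\#v \mid |u|=|v|,\ u\neq v\}$ resurfacing.

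The paper sidesteps this by choosing the bound \emph{before} building the automaton: from the concrete non-functionality witness $(x,y,z)$ it sets $i = \length{y \wedge z}$, which is finite, and then intersects $L_\otimes(T)$ with $M^i_\infty$ from \autoref{lem:mismatch}, whose $i$-bounded counters live in the finite control and whose single guessed datum fits in one extra register (two registers for the whole component, giving $2k+2$ in total). The property preserved under relabelling is then ``$\length{\projout(\rho'_1)\wedge\projout(\rho'_2)} \leq i$'', which, since both outputs are infinite, still forces them to differ. Your proof becomes correct if you replace your on-the-fly mismatch guesser by this fixed-$i$ construction; as written, the one-register budget claim for condition (b) is unjustified and the argument does not go through.
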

\begin{proof}
  The left-to-right direction is trivial. Now, assume $T$ is not functional. Let
  $x \in (\Sigma \times \data)^\omega$ be such that there exists $y,z \in
  (\Gamma \times \data)^\omega$ such that $y \neq z$ and $(x,y),(x,z) \in
  \rel{T}$. Let $i = \length{y \wedge z}$. Then, consider the language $L =
  \{\rho_1 \otimes \rho_2 \mid \rho_1$ and $\rho_2$ are accepting runs of $T,
  \projin(\rho_1) = \projin(\rho_2) \text{ and } \length{\projout(\rho_1) \wedge
    \projout(\rho_2)} \leq i\}$. Since, by \autoref{lem:otimes}, $L_\otimes(T)$
  is recognised by an \nra with $2k$ registers and, by \autoref{lem:mismatch},
  $M^i_{\infty}$
  is recognised by an \nra with $2$ registers, we get that $L = L_\otimes(T)
  \cap M^i_\infty$ is recognised by an \nra with $2k+2$ registers.

  Now, $L \neq \varnothing$, since, by letting $\rho_1$ and $\rho_2$ be the runs
  of $T$ both with input $x$ and with respective outputs $y$ and $z$, we have
  that $w = \rho_1 \otimes \rho_2 \in L$. Let $X \subseteq \data$ such that
  $\size{X} \geq 2k+3$ and $\dz \in X$. By
  \autoref{prop:indistinguishability}, we get that $L \cap (\Sigma
  \times X)^\omega \neq \varnothing$. By letting $w' = \rho'_1 \otimes \rho'_2
  \in L \cap (\Sigma \times X)^\omega$, and $x' = \projin(\rho'_1) =
  \projin(\rho'_2)$, $y' = \projout(\rho'_1)$ and $z' = \projout(\rho'_2)$, we
  have that $(x',y'), (x',z') \in \rel{T} \cap ((\Sigma \times X)^\omega \times (\Gamma \times X)^\omega)$ and $\length{y' \wedge z'} \leq i$,
  so, in particular, $y' \neq z'$ (since both are infinite words). Thus,
  $\rel{T} \cap ((\Sigma \times X)^\omega \times (\Gamma \times X)^\omega)$ is
  not functional.
\end{proof}

As a consequence of \autoref{prop:funfinite} and \autoref{thm:functionality}, we
obtain the following result. The lower bound is obtained by encoding
non-emptiness of register automata, which is
\PSpace-complete~\cite{DBLP:journals/tocl/DemriL09}.

\begin{corollary}
  \label{coro:fun}
  Deciding whether an \nrt $T$ is functional is \PSpace-complete.
\end{corollary}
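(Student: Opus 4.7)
The plan is to obtain the upper bound by assembling \autoref{thm:functionality}, \autoref{prop:regularRestrictionTrd} and \autoref{prop:funfinite}, and to prove the matching lower bound by a reduction from the non-emptiness problem of \nra.

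For the \PSpace upper bound, fix a set $X \subseteq \data$ of size $2k+3$ containing $\dz$, where $k = |R|$. By \autoref{thm:functionality}, $T$ is functional if and only if $\rel{T} \cap ((\Sigma \times X)^\omega \times (\Gamma \times X)^\omega)$ is functional, and by \autoref{prop:regularRestrictionTrd} this restricted relation is recognised by an \nft $T_X$ with $O(|Q|\cdot|X|^{|R|})$ states. A state of $T_X$ is a pair of a state of $T$ and a register valuation $R \to X$, hence has polynomial-size description, and its outgoing transitions can be computed in polynomial time from the description of $T$. The \NLogSpace algorithm of \autoref{prop:funfinite} can therefore be executed on $T_X$ in an on-the-fly fashion, without ever materialising $T_X$ explicitly: it uses logarithmic space in $|T_X|$, and hence polynomial space in $|T|$. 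The only subtlety to check is that the procedure of \autoref{prop:funfinite} only manipulates states through their local transition structure, which is granted since it is a pattern model-checking procedure.

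For \PSpace-hardness, we reduce from non-emptiness of \nra, known to be \PSpace-hard~\cite{DBLP:journals/tocl/DemriL09}. Given an \nra $A$, we construct an \nrt $T_A$ which, at its initial state, non-deterministically chooses between two branches. The first branch reads any infinite input via a single B\"uchi-accepting self-loop (test $\top$, empty assignment) and outputs $(a,\dz)$ at each step, thus producing $(a,\dz)^\omega$ on every input. The second branch faithfully simulates $A$ on its input, outputting a single $(b,\dz)$ at each transition and inheriting the B\"uchi acceptance condition of $A$; it admits an accepting run over $u$ iff $u \in L(A)$, in which case the produced output is $(b,\dz)^\omega$. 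Since $(a,\dz)^\omega$ and $(b,\dz)^\omega$ already mismatch at position $1$, the transducer $T_A$ is non-functional iff both branches accept a common input, iff $L(A) \neq \varnothing$. Thus $T_A$ is functional iff $L(A) = \varnothing$, and since \PSpace is closed under complement this yields the desired \PSpace-hardness. The only potential obstacle here is the on-the-fly simulation in the upper bound, but this reduces to standard observations on succinct representations of exponentially-large automata.
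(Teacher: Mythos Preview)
Your proposal is correct and follows essentially the same approach as the paper. The upper bound argument is identical (combine \autoref{thm:functionality}, \autoref{prop:regularRestrictionTrd} and \autoref{prop:funfinite}, running the \NLogSpace procedure on the exponential \nft on-the-fly), and the lower bound is again a reduction from \nra emptiness; the only difference is the gadget: the paper uses a transducer realising $f_1\cup f_2$ with $f_1(w\#w')=w\#w'$ and $f_2(w\#w')=w\#^\omega$ restricted to $w\in L(A)$, whereas you take the union of a universal branch outputting $(a,\dz)^\omega$ and a simulating branch outputting $(b,\dz)^\omega$. Your gadget is arguably cleaner since it avoids the finite-word/infinite-word mismatch implicit in the paper's use of $w\in L(A)$ as a finite prefix; just make explicit that the constant output $(a,\dz)$ is realised by $(a,r)$ for a fresh register $r$ that is never assigned.
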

\begin{proof}
  Let $T$ be an \nrt with $k$ registers. Choose any $X \subseteq \data$ of size
  $2k+3$ such that $\dz \in X$. By \autoref{thm:functionality}, $T$
  is functional if and only if $\rel{T} \cap ((\Sigma \times X)^\omega \times
  (\Gamma \times X)^\omega)$ is. By \autoref{prop:regularRestrictionTrd},
  $\rel{T} \cap (\Sigma \times X)^\omega \times(\Gamma \times X)^\omega$ is
  recognisable by an \nft with $O(\size{Q} \times \size{X}^{\size{R}})$ states,
  which can be constructed on-the-fly. The functionality problem for data-free
  transducers is in \NLogSpace by \autoref{prop:funfinite}, so this yields a
  \PSpace procedure.

  For the hardness, it is known that the emptiness problem of a register
  automaton $A$ is \PSpace-hard~\cite{DBLP:journals/tocl/DemriL09}. It can be
  easily reduced to a functionality problem, by constructing a transducer $T$
  which is functional iff its domain is empty, for instance a transducer
  realising the relation $f_1\cup f_2$ where $f_1 = \{ (wxw',wxw')\mid w\in
  L(A),w'\in(\Sigma\times \data)^\omega,x\in \{\#\}\times \data\}$ and $f_2 =
  \{(wxw',wx^\omega)\mid w\in L(A),w'\in(\Sigma\times \data)^\omega,x\in
  \{\#\}\times \data\}$, where $\#\not\in\Sigma$ is a new symbol. The transducer
  $T$ can be constructed from $A$ in polynomial time, and is functional iff
  $L(A) = \varnothing$.
\end{proof}

\subsection{Equivalence}
As a consequence of \autoref{coro:fun}, the following problem on the equivalence of \nrt is decidable:
\begin{theorem}
  The problem of deciding, given two functions $f,g$ defined by \nrt, whether
  for all $x\in\dom(f)\cap \dom(g)$, $f(x) = g(x)$, is \PSpace-complete.
\end{theorem}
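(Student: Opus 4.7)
The plan is to reduce this equivalence problem to the functionality problem on \nrt, and to get the lower bound from the same source used for \autoref{coro:fun}.

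For the upper bound, suppose $f$ and $g$ are defined by \nrt $T_f$ and $T_g$ with $k_f$ and $k_g$ registers respectively. I would build the \nrt $T = T_f \uplus T_g$, i.e.\ the disjoint union with a fresh initial state that non-deterministically $\varepsilon$-chooses to simulate either $T_f$ or $T_g$ (one can eliminate $\varepsilon$-transitions at the first step in the standard way, or fold this choice into the transitions out of $i_0$). Then $\rel{T} = \rel{T_f} \cup \rel{T_g}$, and $T$ has size polynomial in $\size{T_f}+\size{T_g}$. The key observation is that for $x \notin \dom(f)\cap\dom(g)$, exactly one of the two sub-automata contributes to outputs on $x$, which are all equal to $f(x)$ or all equal to $g(x)$ since $f$ and $g$ are functions. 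Thus $T$ is functional if and only if $f(x)=g(x)$ for every $x\in \dom(f)\cap\dom(g)$. Applying \autoref{coro:fun} to $T$ yields the \PSpace upper bound.

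For the lower bound, I would reduce from the emptiness problem of \nra, known to be \PSpace-hard~\cite{DBLP:journals/tocl/DemriL09}. Given an \nra $A$ over $\Sigma\times\data$, fix two distinct letters $a,b\in\Gamma$ and define two trivially functional \nrt $T_f$ and $T_g$ as follows: $T_f$ has domain $L(A)$ (obtained by plugging $A$ in front as an acceptance gadget, then generating the constant output $(a,\dz)^\omega$), and $T_g$ has universal domain $(\Sigma\times\data)^\omega$, producing the constant output $(b,\dz)^\omega$. Both are straightforwardly functional, they are constructible in polynomial time from $A$, and $\dom(f)\cap\dom(g)=L(A)$. Since $(a,\dz)^\omega\neq (b,\dz)^\omega$, the functions coincide on the common domain iff $L(A)=\varnothing$, giving \PSpace-hardness.

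No serious obstacle is anticipated: the argument is a direct corollary of \autoref{coro:fun} once one notices the disjoint-union trick, and both reductions are elementary. The only point requiring minor care is ensuring that the disjoint union indeed satisfies the convention that accepting runs produce infinite outputs, but this is inherited from $T_f$ and $T_g$ since their accepting runs already do.
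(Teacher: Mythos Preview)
Your proposal is correct and follows essentially the same approach as the paper: the upper bound is identical (test functionality of the disjoint union $T_f\uplus T_g$ via \autoref{coro:fun}), and the lower bound likewise reduces from \nra emptiness. The only cosmetic difference is that the paper reuses the specific functions $f_1,f_2$ from the hardness proof of \autoref{coro:fun}, whereas you build a fresh pair of constant-output transducers; your variant is equally valid and arguably cleaner.
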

\begin{proof}
  The formula $\forall x\in \dom(f)\cap \dom(g)\cdot f(x) = g(x)$ is true iff the
  relation $f\cup g = \{ (x,y)\mid y=f(x)\vee y=g(x)\}$ is a function. The
  latter can be decided by testing whether the disjoint union of the transducers
  defining $f$ and $g$ defines a function, which is in \PSpace by
  \autoref{coro:fun}. To show the hardness, we reduce the emptiness
  problem of \nra $A$ over finite words, just as in the proof of
  \autoref{coro:fun}. In particular, the functions $f_1$ and $f_2$ defined
  in this proof (which have the same domain) are equal iff
  $L(A)=\varnothing$.
\end{proof}

Note that under the promise that $f$ and $g$ have the same domain, then the
latter theorem implies that it is decidable to check whether the two functions
are equal. However, checking $\dom(f) = \dom(g)$ is undecidable, as the
language-equivalence problem for non-deterministic register automata is
undecidable, since, in particular, universality is
undecidable~\cite{Neven:2004:FSM:1013560.1013562} and the universal language
$(\Sigma \times \data)^\omega$ is trivially \nra-definable.

\subsection{Closure under composition}

Closure under composition is a desirable property for transducers, which 
holds in the data-free setting~\cite{berstel2009}. We show that it also
holds for functional \nrt.

\begin{theorem}\label{thm:closure}
  Let $f,g$ be two functions defined by \nrt. Then, their composition $f\circ g$
  is (effectively) definable by some \nrt.
\end{theorem}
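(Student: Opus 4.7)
The plan is to construct a product NRT $T$ realising $f\circ g$ by simulating, on a single input, first an NRT $T_g$ for $g$ and then feeding its output to an NRT $T_f$ for $f$. Say $T_g$ has input alphabet $\Sigma$ and output alphabet $\Gamma$, and $T_f$ goes from $\Gamma$ to $\Delta$. The transducer $T$ uses the disjoint union of register sets $R = R_g\uplus R_f$; each input step of $T$ consumes one letter from $\Sigma\times\data$, fires one transition of $T_g$, and then sequentially fires as many transitions of $T_f$ as $T_g$'s transition produces output letters.

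The chief difficulty is that the tests of $T_f$, when triggered inside $T$, compare registers in $R_f$ to a ``current input'' which is actually the content of some register $r\in R_g$, whereas the NRT test syntax only allows comparisons between the current input data and a single register, not between two registers. To bridge this gap, I would enrich $T$'s finite state with the equivalence relation $\sim$ on $R_g\cup R_f$ induced by equality of the stored data. The set of such relations is finite, $\sim$ can be updated deterministically under register assignments, and it suffices to resolve any cross-register comparison encountered during the simulation.

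A single transition of $T$ on input $(\sigma,d)$ then proceeds as follows: (i) $T$ determines (non-deterministically guesses, then verifies with tests of the form $r^=$) the $\sim$-class of $d$, giving it full knowledge of the equality type of $R\cup\{d\}$; (ii) it picks a $T_g$-transition compatible with $(\sigma,d)$, updates the $R_g$-registers and $\sim$ accordingly, and reads off its output word $(\gamma_1,r_1)\dots(\gamma_m,r_m)$; (iii) for $i=1,\dots,m$ in order, it selects a $T_f$-transition whose test holds at $\tau(r_i)$, where an atom $s^=$ for $s\in R_f$ is translated into ``$r_i\sim s$'', and then updates $R_f$ and $\sim$ according to the chosen $T_f$-assignment; (iv) $T$ emits the concatenation of the letters produced at step (iii). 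Since no new data enters during steps (ii)--(iii), every test encountered is either a comparison with $d$ (expressible directly) or an inter-register comparison (resolvable via $\sim$), so the whole step fits into one NRT transition.

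For acceptance I would use the standard product-of-Büchi construction, with a two-valued flag alternating between waiting for a visit to $F_g$ and a visit to $F_f$; since accepting runs of $T_g$ produce infinite outputs by the paper's global convention, the simulated $T_f$-run is indeed infinite and its Büchi condition is meaningful. Correctness is then a straightforward bisimulation: accepting runs of $T$ on input $u$ producing output $w$ correspond bijectively to pairs $(\rho_g,\rho_f)$ of accepting runs of $T_g$ and $T_f$ satisfying $\projout(\rho_g)=\projin(\rho_f)$, so $\rel{T}$ is the relational composition of $\rel{T_g}$ and $\rel{T_f}$, which coincides with the functional composition $f\circ g$ by functionality of $f$ and $g$.
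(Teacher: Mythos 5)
Your architecture matches the paper's: a product over $R_g\uplus R_f$ whose finite states additionally record the equality type of the registers (the paper's constraint set $C\subseteq R_g\times R_f$ is your $\sim$), explicit tests obtained by guessing the equality class of the current datum, $T_f$'s tests resolved through $\sim$, and multi-letter outputs of $T_g$ handled by chaining several $T_f$-transitions inside one transition of $T$. The B\"uchi product is unproblematic. However, there is a genuine gap at step (iii), ``updates $R_f$ \dots{} according to the chosen $T_f$-assignment''.

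In the \nrt{} model an assignment can only store the \emph{current input datum} $d$ into registers. When the simulated $T_f$-transition fires on an output letter $(\gamma_i,r_i)$ of $T_g$, the datum it must store into some $s\in R_f$ is the content of the $T_g$-register $r_i$, which in general is \emph{not} $d$ (namely whenever $r_i$ was not assigned on the current $T_g$-transition). So $T$ cannot physically write that datum into $s$: it would need a register-to-register copy $s:=r_i$, which is not a primitive of the model. Tracking $\sim$ alone does not repair this: if you merely record $s\sim r_i$ and $T_g$ later overwrites $r_i$ (and every other register holding that datum), the value $s$ is supposed to hold is no longer present in any physical register, so it can neither be tested against nor emitted. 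This is exactly the obstacle the paper isolates and spends most of its proof on: it first allows reassignments $r:=r'$ as an extension of the model, performs the product construction with assignments of the form $r:=r_g$ for $r\in\asgn_f$, and then proves separately (\autoref{thm:reassign}) that reassignments are eliminable, by maintaining in the state a substitution $\lambda:R\to R\cup\{s\}$ together with one spare register and redirecting every test, assignment and output through $\lambda$ (with the invariant $\tau(r)=\tau'(\lambda(r))$); it also substitutes, in the emitted output, each $R_f$-register by a $\sim$-equivalent $R_g$-register, since within one macro-step the intermediate contents of $R_f$ are not physically available. Some such mechanism is indispensable, and your proposal currently contains none, so the construction as written does not produce a legal \nrt.
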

\begin{proof}[Sketch]
  We first give a sketch of the proof. Then, the rest of this section is devoted
  to the details of the proof.

  By $f\circ g$ we mean $f\circ g: x \mapsto f(g(x))$. Assume $f$ and $g$ are
  defined by $T_f = (Q_f, R_f, q_0, F_f, \Delta_f)$ and $T_g = (Q_g, R_g, p_0,
  F_g, \Delta_g)$ respectively. Wlog we assume that the input and output finite
  alphabets of $T_f$ and $T_g$ are all equal to $\Sigma$, and that $R_f$ and
  $R_g$ are disjoint. We construct $T$ such that $\rel{T} = f\circ g$. The proof
  is similar to the data-free case where the composition is shown via a product
  construction which simulates both transducers in parallel, executing the second
  on the output of the first. Assume $T_g$ has some transition
  $p\xrightarrow{\sigma,\tst\mid \{r\},o} q$ where $o\in (\Sigma\times R_g)^*$.
  Then $T$ has to be able to execute transitions of $T_f$ while processing $o$,
  even though $o$ does not contain any concrete data values (it is here the main
  important difference with the data-free setting). However, if $T$ knows the
  equality types between $R_f$ and $R_g$, then it is able to trigger the
  transitions of $T_f$. For example, assume that $o = (a,r_g)$ and assume that
  the content of $r_g$ is equal to the content of $r_f$, $r_f$ being a register
  of $T_f$, then if $T_f$ has some transition of the form
  $p'\xrightarrow{a,r_f^{=}\mid \{r'_f\},o'} q'$ then $T$ can trigger the
  transition $(p,q)\xrightarrow{\sigma,\tst\mid \{r\}\cup \{r'_f := r_g\},o'}
  (p',q')$ where the operation $r'_f := r_g$ is a syntactic sugar on top of \nrt
  that intuitively means ``put the content of $r_g$ into $r'_f$''. In the
  following, we show that it is indeed syntactic sugar
  (\autoref{thm:reassign}) and provide more details on the construction.
\end{proof}

\subsubsection{NRT with reassignments}

An \emph{\nrt with reassignments} is defined as an \nrt where operations on
registers are of the form $r := curr$ to assign the current data read, or $r :=
r'$ to assign to $r$ the value of $r'$. It is required that each register $r$ appears
at most once as the left-hand-side of an assignment $r:=curr$ or $r:=r'$. Given
a configuration $(q,\tau)$ and some transition realising a set of instructions
$I$ and going to some state $q'$, the new register configuration $\tau'$ is
defined as follows. First, instructions $r := curr$ are executed, giving a new
register configuration $\tau''(s) = \tau(s)$ if $s$ has not been assigned the
current data, and $\tau''(s) =d$ otherwise, if $d$ is the current data. Then,
instructions $r:=r'$ are executed, so we let $\tau'(r) = \tau''(r')$ whenever
there is an instruction $r:=r'$, otherwise $\tau'(r)=\tau''(r)$.

We show in \autoref{thm:reassign} that this feature does not add expressive
power to the model.

\subsubsection{NRT with explicit tests}

First, to ease the construction, we need to explicit the tests, showing the
equivalence with register automata as defined
in~\cite{DBLP:conf/csl/Segoufin06}. Indeed, in our setting, they are represented
using logical formulas, which allows for a more compact representation, but
prevents us from keeping track of the equality relations between the different
registers, which will be needed in the construction. For instance, after a
transition $q\myxrightarrow{a, r_1 \vee r_2 \mid \{r_3\},r_1} q'$ is taken, we
do not know whether we have $r_1 = r_3$ or $r_2 = r_3$, since it depends whether
the current data is equal to the content of $r_1$ or of $r_2$. To explicit these
two cases, such transition can thus be split into two distinct transitions,
$q\myxrightarrow{a, r_1 \mid \{r_3\},r_1} q'$ and $q\myxrightarrow{a, r_2 \mid
  \{r_3\},r_1} q'$. Moreover, we also ignore whether such data is equal to some
content of some other register $r_4$, which could entail additional equalities
between registers. The two cases ($d = r_4$ and $d \neq r_4$) should thus yield
two different transitions. Such operations can be generalised, allowing for all
tests to be of the form $\tst_E = \bigwedge_{r \in E} r^= \wedge \bigwedge_{r
  \notin E} r^{\neq}$, where $E \subseteq R$ is a set of registers, meant to be
\emph{exactly} the set of registers to which the current data is equal. Thus,
when a data is read, it satifies at most one test, and the equality relations
between the registers can be updated.

Formally, a transition $q\xrightarrow{\sigma_\inp,\tst\mid
  \sigma_\outp,\asgn,r}_T q'$ is replaced by all the transitions
$q\xrightarrow{\sigma_\inp,\tst_E\mid \sigma_\outp,\asgn,r}_T q'$ for all
$E\subseteq R_k$ such that $\tst_E \Rightarrow \tst$ is true. Note that in
general, such operation adds exponentially many transitions (but does not affect
the number of states). In the rest of this section, we assume
that all transducers are in this normal form, and simply write $E$ for $\tst_E$.

Now, we can move to the removal of reassignments:
\begin{theorem}\label{thm:reassign}
  For all \nrt with reassignment $T$, one can construct an \nrt $T'$ such that
  $\rel{T} = \rel{T'}$.
\end{theorem}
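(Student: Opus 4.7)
The plan is to simulate $T$ by an \nrt $T'$ whose states augment those of $T$ with a ``pointer function'' $\pi \colon R \to R$ satisfying the invariant that the physical register $\pi(r)$ of $T'$ currently holds the data value that $T$ stores in its logical register $r$. Initially $\pi = \mathrm{id}$, matching the fact that both machines start with every register containing $\dz$. A reassignment $r := r'$ is then simulated by merely rewiring the pointer, setting $\pi'(r) = \pi(r')$ without touching any physical register; a genuine assignment $r := \mathrm{curr}$ is simulated by allocating a fresh physical register $s$ and letting $T'$ execute $s := \mathrm{curr}$ while setting $\pi'(r) = s$. The two-stage semantics of reassignment (first all $r := \mathrm{curr}$, then all $r := r'$) is mirrored by first forming an intermediate $\pi^{1/2}$ from the $\mathrm{curr}$-instructions and then setting $\pi'(r) = \pi^{1/2}(r')$ for the copy-instructions.

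Before constructing $T'$ I would normalise $T$ into the explicit-test form introduced just above the theorem, so that every guard is of the shape $\tst_E$ for some $E \subseteq R$. Under the invariant, any such test is satisfiable only when $E$ is a union of classes of the equivalence $r \sim r'$ iff $\pi(r) = \pi(r')$; all other candidate transitions can be safely pruned. The test $\tst_E$ of $T$ then translates verbatim into the test $\tst_{\pi(E)}$ on physical registers in $T'$, and each output letter $(\gamma_i, r_i)$ of $T$ becomes $(\gamma_i, \pi'(r_i))$, using the freshly computed $\pi'$.

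The main technical point to verify is that a fresh physical register is always available when $T$ performs at least one $r := \mathrm{curr}$: the registers occupied by the unchanged logical registers are exactly $\pi(R \setminus \mathrm{LHS}(I))$, a set of size at most $|R| - 1$ whenever $\mathrm{LHS}(I) \neq \varnothing$, so $T'$ can reuse $R$ itself as its physical register set and a free element always exists. Correctness of the construction is then a routine induction maintaining the invariant: every accepting run of $T$ lifts to an accepting run of $T'$ with the same input and output by tracking $\pi$, and, conversely, forgetting the pointer component of an accepting run of $T'$ yields an accepting run of $T$. Acceptance is preserved since the B\"uchi set of $T'$ is $F \times R^R$, and the state-space blow-up is bounded by $|Q| \cdot |R|^{|R|}$.
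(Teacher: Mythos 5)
Your construction is essentially the paper's: it too keeps a substitution $\lambda : R \to R'$ in the state, maintains the invariant $\tau(r) = \tau'(\lambda(r))$, rewires the substitution for copies $r := r'$, and routes every $r := \mathrm{curr}$ target to a single fresh physical register, with correctness by induction on run prefixes. The one place you genuinely deviate --- and where your argument has a flaw --- is the claim that $R$ itself can serve as the physical register set because the registers that must be preserved are exactly $\pi(R \setminus \mathrm{LHS}(I))$. That set is too small: by the two-stage semantics, a copy $r := r'$ reads the \emph{old} content of $\pi(r')$ even when $r'$ is itself the target of a copy, so $\pi(r')$ must also survive the transition. Concretely, take $R = \{r_1,r_2,r_3\}$ and $I = \{r_1 := \mathrm{curr},\ r_2 := r_3,\ r_3 := r_2\}$: then $\mathrm{LHS}(I) = R$, so your criterion declares every physical register free and permits $s = \pi(r_2)$; writing the current datum there destroys the value that $r_3$ is supposed to receive. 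The correct set to avoid is $\pi(R \setminus \mathrm{LHS}_{\mathrm{curr}}(I))$, which still has size at most $|R|-1$ precisely when a fresh register is needed, so your conclusion (no extra register required) can be salvaged; the paper sidesteps the issue entirely by taking $R' = R \cup \{s\}$ and choosing $r_0 \notin \lambda(R)$.

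A second, minor imprecision (which the paper shares): saying the test ``translates verbatim into $\tst_{\pi(E)}$'' is not quite right if $\tst_{\pi(E)}$ is read as an exact test over the whole physical register set, since it would then also demand disequality with the stale registers outside $\pi(R)$ and could wrongly block transitions. The translated guard should be the conjunction $\bigwedge_{s \in \pi(E)} s^= \wedge \bigwedge_{s \in \pi(R \setminus E)} s^{\neq}$, constraining live registers only; this is harmless since \nrt admit arbitrary Boolean tests.
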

\begin{proof}
  The idea is to keep in memory (in the states of $T'$) substitutions $\lambda$
  of registers by other registers, and define tests and outputs modulo those
  substitutions. We also add one extra register in $T'$. So, if $R$ is the set
  of registers of $T$, we let $R' = R\cup \{ s\}$ where $s$ is some extra
  register. The states of $T'$ maintain substitutions $\lambda$ of type
  $R\rightarrow R'$, where the initial substitution $\lambda_0$ is the identity
  on $R$. Hence, the set of states of $T'$ are pairs $(q,\lambda)$ where $q$ is a
  state of $T'$. Moreover, the reached substitution $\lambda$ 
  only depends on the
  sequence of transitions taken by $T$, so that we have a bijection between
  runs of $T$ and that of $T'$. The transducer $T'$ has the following invariant:
  for all runs of $T$ over some finite prefix $u$ reaching some configuration
  $(q,\tau)$, there exists a run of $T'$ on $u$ with the same sequence of
  $T$-states, reaching a configuration $((q,\lambda),\tau')$ such that for all
  $r\in R$, $\tau(r) = \tau'(\lambda(r))$. The converse also holds.

  Formally, $T'$ is constructed as follows: from any $T$-transition
  $q\myxrightarrow[T]{\sigma,E \mid \asgn,o} q'$ we create the
  $T'$-transitions 
  \begin{equation}
    \label{eq:derivedTransitions}
  (q,\lambda)\myxrightarrow[T']{\sigma,E'\mid \{r_0\},o'}
  (q',\lambda')
  \end{equation}
  for all substitutions $\lambda: R\rightarrow R'$, such that the
  following conditions hold:
  \begin{enumerate}

    \item $E' = \{ \lambda(r)\mid r\in E\}$
    \item $r_0\not\in \lambda(R)$ (it exists since
      $|R'|>|R|$). 
    \item  Let $\lambda''$ such that $\lambda''(r) = r_0$ if
      $(r:=curr)\in \asgn$, and $\lambda''(r) = \lambda(r)$
      otherwise. Then, for all $(r:=r')\in \asgn$, we let $\lambda'(r)
      = \lambda''(r')$, and if $r$ does not occur in some assignment
      $r:=r'$, then we let $\lambda'(r) = \lambda''(r)$. 
    \item if $o = (\sigma_1,r_1)\dots (\sigma_k,r_k)$, then $o' =
      (\sigma_1,\lambda'(r_1))\dots (\sigma_k,\lambda'(r_k))$. 
  \end{enumerate}

  Let us show that the invariant is preserved. It is obviously true for runs of
  length $0$. Now, let $r$ be a run of $T$ on a word
  $u\in(\Sigma\times \data)^*$, such that the last transition is 
  $(q,\tau)\myxrightarrow[T]{\sigma,E\mid \asgn,o}
    (q',\tau')$. By induction hypothesis there exists a run of $T'$
    on $u_1\dots u_{n-1}$ reaching a configuration
    $((q,\lambda),\gamma)$ such that $\tau = \gamma\circ
    \lambda$. We show that we can extend this run with a
    $T'$-transition from $(q,\lambda)$ as defined in \autoref{eq:derivedTransitions}, towards a
    configuration $((q',\lambda'),\gamma')$ such that $\tau' =
    \gamma'\circ \lambda'$. Let $d\in\data$ the last data
    of $u$. We know that $\tau(r) = d$ for all $r\in E$. We have to
    show that $\gamma(r') = d$ for all $r'\in E'$, i.e. $\gamma(\lambda(r)) =
    d$ for all $r\in E$, by definition of $E'$. This is immediate
    as $\tau = \gamma\circ \lambda$. Hence, the test is satisfied and
    the $T'$-transition defined in \autoref{eq:derivedTransitions} can be triggered.

    Now, let us show that $\tau' = \gamma'\circ \lambda'$. Let $r\in
    R$. First, note that $\gamma'(r) = \gamma(r)$ for all $r\neq
    r_0$. Then, let us consider several cases:
    \begin{itemize}
      \item if $r$ has been untouched by $\asgn$,
    then $\lambda'(r) = \lambda''(r) = \lambda(r)$ and $\tau'(r)
    = \tau(r)$, hence $\tau'(r) = \tau(r) = (\gamma\circ\lambda)(r) =
    (\gamma\circ \lambda')(r) = (\gamma'\circ\lambda')(r)$. The latter
    equality holds since
    $\gamma'(\lambda'(r)) = \gamma(\lambda'(r))$, because
    $\lambda'(r)\neq r_0$ (since $r$ has been untouched by $\asgn$). 
  \item if $(r:=curr)\in \asgn$, then $\lambda''(r) = r_0$ and since
    $(r:=r')\not\in \asgn$ for all $r'\in R$ (by our assumption that
    $r$ only occurs at most once in the lhs of assignments), we also
    have $\lambda'(r) = \lambda''(r) = r_0$. Hence $\gamma'\circ
    \lambda'(r) = \gamma'(r_0) = d$ and  $\tau'(r) = d$ since
    $(r:=curr)\in \asgn$, hence $\tau'(r) = \gamma'\circ \lambda'(r)$.  
  \item if $(r:=r')\in \asgn$, then $\lambda'(r) =
    \lambda''(r')$. Then there are again several cases:
    \begin{enumerate}
        \item $(r':=curr)\in \asgn$, then $\lambda''(r') = r_0$ and
          therefore $\gamma'\circ \lambda'(r) = \gamma'(r_0) =
          d$. Moreover, since $T$ first executes $r' := curr$ and then
          $r := r'$, we get $\tau'(r) = \tau(r) = d$, so that
          $\gamma'\circ \lambda'(r) = \tau'(r)$
        \item $(r':=curr)\not\in \asgn$, then $\lambda''(r') =
          \lambda(r')$. So, $\gamma'\circ \lambda(r) = \gamma'\circ
          \lambda(r')$. Necessarily,  $\lambda(r') \neq r_0$ by
          definition of $r_0$, which must satisfy $r_0\not\in
          \lambda(R)$. Therefore, $\lambda(r')$ is not assigned the
          current data value by $T'$, hence $\gamma'(\lambda(r')) =
          \gamma(\lambda(r'))$. Finally, we
          have:
          \begin{align*}
            \gamma'\circ \lambda' (r) & = \gamma'\circ \lambda''(r')
            \\
            & = \gamma'\circ \lambda(r') \\
            & = \gamma\circ \lambda(r') & \text{ since } \gamma'(\lambda(r')) =
          \gamma(\lambda(r')) \\
            & = \tau(r') & \text{by induction hypothesis} \\
            & = \tau'(r) & \text{since $(r:=r')\in\asgn$}
          \end{align*}
    \end{enumerate}
\end{itemize}
So, we have shown that the invariant is preserved. We have left to
show that $\gamma'(o') = \tau'(o)$, in the sense that if $r$ is the
$i$th register in $o$, then $\tau'(r) = \gamma'(r')$ where $r'$ is the
$i$th register of $o'$. By definition of $o'$, we have $r'=
\lambda'(r)$. By the invariant, $\tau'(r) = \gamma'\circ \lambda'(r)$,
hence $\tau'(r) = \gamma'(r')$. 

Overall, for all runs of $T$, there exists a run of $T'$
simulating it in the sense that it follows the same sequence of
$T$-states and produces the same output. This allows one to conclude
that $\rel{T}\subseteq \rel{T'}$. The other inclusion,
$\rel{T'}\subseteq \rel{T}$ also holds. It can be shown
similarly by induction on the length of runs. If $T'$ can trigger some
transition, this transition has been obtained from a transition $t$ of
$T$, and one can show using the same equalities and reasoning as above
that $t$ can also be triggered and that the same invariant as above is
satisfied, concluding the proof.
\end{proof}

\subsubsection{Proof of \autoref{thm:closure}, continued}
In the following construction, we first assume that $T_g$ produces at most one
letter at a time, i.e., the output $o\in (\Sigma\times \data)^*$ on all its
transitions is such that $|o| = 1$. We explain later on how to generalise it to
an arbitrary length. Under this assumption, we let $Q = Q_g\times Q_f\times
2^{R_g\times R_f}$ be the set of states of $T$ and $R = R_g\uplus R_f$ its set
of registers. Our construction satisfies the following invariant: if after
reading a prefix $x$, $T$ is in some configuration $((p,q,C),\tau)$, then $T_g$
is in state $p$, $T_f$ is in state $q$ after reading the output of $T_g$ on $x$,
and $(r_g,r_f)\in C$ iff $\tau(r_g) = \tau(r_f)$. The \emph{constraints} $C$ are
thus used to keep track of the equality relations between the registers of $T_f$
and $T_g$, a standard technique in the setting of register automata. We now
define the set of transitions. $(p,q,C)\xrightarrow{\sigma,E\mid \asgn,o}
(p',q',C')$ if the following conditions hold:
\begin{enumerate}
    \item there exists a transition
      $p\myxrightarrow[T_g]{\sigma,E_g\mid \asgn_g,(\sigma',r_g)} p'$,
    \item there exists a transition 
      $q\myxrightarrow[T_f]{\sigma',E_f\mid \asgn_f,o_f} q'$,
    \item let $C''$ be the intermediate equality type defined as the 
      equalities $(r_1,r_2)$ when $(r_1,r_2)$ were already equal and
      $r_1$ has not be reassigned the current data value, or $r_1$ has
      be reassigned the current data value, but this data value is
      equal to some $r'_1\in R_g$ (i.e. $r'_1\in E_g$) and $r'_1$
      was known to be equal to $r_2$. Formally:
      \begin{align*}
      C'' =  \{&(r_1,r_2)\mid (r_1,r_2)\in C\wedge
               r_1\not\in\asgn_g\} \\
        \cup\ \{&(r_1,r_2)\mid r_1\in\asgn_g\wedge
      \exists r'_1\in E_g\cdot (r'_1,r_2)\in C\}
      \end{align*}
      
      Then, we require that 
      for all $r_f\in R_f$, $r_f\in E_f$ iff $(r_g,r_f)\in C''$
    \item $\asgn = \asgn_g\cup \{ r := r_g\mid r\in \asgn_f\}$ (we
      assume that the instructions $r := r_g$ are executed after the
      assignments of the current data value)
    \item $o = o_f$
    \item $(r_1,r_2)\in C'$ iff  $(r_1,r_2)\in C''$ and $r_2\not\in
      \asgn_f$, or $r_2\in \asgn_f$ and $r_1=r_g$
    \end{enumerate}

    Finally, let us explain informally how to generalise this construction to
    outputs $o$ of arbitrarily lengths. Assume that $o =
    (\sigma_1,r_g^{1})(\sigma_2,r_g^{2})\dots (\sigma_n,r_g^{n})$.
    Note that $n$ is bounded and depends only on $T_f$. 
    The difference is at point $2$
    where instead we require the existence of a sequence of $n$
    transitions
    $$
    q_0\myxrightarrow[T_f]{\sigma_1,E_f^1\mid \asgn_f^1,o_f^1}
    q_1\myxrightarrow[T_f]{\sigma_2,E_f^2\mid \asgn_f^2,o_f^2}
    q_2\dots q_{n-1}\myxrightarrow[T_f]{\sigma_n,E_f^n\mid
      \asgn_f^n,o_f^n} q_n
    $$
    such that $q_0 = q$ and $q_n =q'$, and it satisfies some
    additional conditions that we now explain. First of all, starting
    from $C''$, since those transitions performs a series of
    assignments, the equality types between the registers of $T_g$ and
    $T_f$ may change along executing those transitions, which gives a
    series of equality types $C_1,\dots,C_n$ where $C_i$ is the
    equality type before executing the $i$th transition (with $C_1 =
    C''$ as defined above). Then, we
    require that for all $r_f\in R_f$, $r_f\in E_f^i$ iff $(r_g^i,
    r_f)\in C_i$. From $C_n$ and the last transition, we get a new
    equality type required to be equal to $C'$. We require that $\asgn
    = \asgn_g\cup \{ r := r_g^i\mid r\in \asgn_f^i$ and step $i$ is the
      last time $r$ is assigned in the sequence of
      transitions$\}$. Regarding the output, we cannot just require that $o
    = o_f^1\dots o_f^n$ as the values of the registers in $T_f$
    change along the sequence of transitions. However, for all $i$,
    the registers of $T_f$ are necessarily equal to some register of
    $T_g$, i.e. for all $r\in R_f$, there exists $r'\in R_g$ such that
    $(r',r)\in C_i$. This is due to the fact that each assignment
    $\asgn_i$ assigns the value of $r_g^i$ to the registers in
    $\asgn_i$  (assuming that the assignment of $T_f$ are always
    non-empty, which can be ensured wlog, modulo adding some register
    that always receives the current data value). Hence, in the end we
    require that $o = \alpha(C_1,o^1_f)\dots \alpha(C_n,o^n_f)$ where
    $\alpha(C_i,o^i_f)$ substitutes in $o^i_f$ any register $r_f\in
    R_f$ by some register $r\in R_g$ such that $(r,r_f)\in C_i$,
    concluding the proof.\qed

\begin{remark}
  The proof of \autoref{thm:closure} does not use the hypothesis that $f$
  and $g$ are functions, and actually shows a stronger result, namely that
  relations defined by \nrt are closed under composition.
\end{remark}

\section{Computability and Continuity}
\label{sec:computability-continuity}

We equip the set of (finite or infinite) data words with the usual distance: for
$u,v \in (\Sigma \times \data)^\omega$, $d(u,v) = 0$ if $u=v$ and $d(u,v) =
2^{-\length{u \wedge v}}$ otherwise. A sequence of (finite or infinite) data words
$(x_n)_{n\in\mathbb{N}}$ converges to some infinite data word $x$ if for all
$\epsilon>0$, there exists $N\ge0$ such that for all $n\ge N$, $d(x_n,x)\le
\epsilon$.

In order to reason with computability, we assume in the sequel that the infinite
set of data values $\data$ we are dealing with has an effective representation.
For instance, this is the case when $\data = \mathbb{N}$.

We now define how a Turing machine can compute a function of data
words. We consider deterministic Turing machines, which three tapes: a read-only
one-way input tape (containing the infinite input data word), a two-way working
tape, and a write-only one-way output tape (on which it writes the infinite output data word). Consider some input data word $x\in (\Sigma \times
\data)^\omega$. For any integer $k\in\mathbb{N}$, we let $M(x,k)$ denote the
output written by $M$ on its output tape after having read the $k$ first cells
of the input tape. Observe that as the output tape is write-only, the sequence
of data words $(M(x,k))_{k\ge 0}$ is non-decreasing.

\begin{definition}[Computability]
  A function $f:(\Sigma \times \data)^\omega \rightarrow (\Gamma \times
  \data)^\omega$ is computable if there exists a deterministic multi-tape
  machine $M$ such that for all $x\in\dom(f)$, the sequence $(M(x,k))_{k\ge 0}$
  converges to $f(x)$.
\end{definition}

\begin{definition}[Continuity]
  A function $f:(\Sigma \times \data)^\omega \rightarrow (\Gamma \times
  \data)^\omega$ is \emph{continuous} at $x \in \dom(f)$ if (equivalently):
  \begin{enumerate}[(a)]
  \item for all sequences of data words $(x_n)_{n \in \N}$ converging towards
    $x$, where for all $i \in \N$, $x_i \in \dom(f)$, we have that $(f(x_n))_{n
      \in \N}$ converges to $f(x)$.
  \item $\forall i \geq 0, \exists j \geq 0, \forall y \in \dom(f), \length{x
      \wedge y} \geq j \Rightarrow \length{f(x) \wedge f(y)} \geq i$.
  \end{enumerate}
  Then, $f$ is continuous if and only if it is continuous at each $x \in
  \dom(f)$. Finally, a functional \nrt $T$ is \emph{continuous} when $\rel{T}$
  is continuous.
\end{definition}

\begin{example}\label{ex:cont}
  We give an example of a non-continuous function $f$. The finite input and
  output alphabets are unary, and are therefore ignored in the description of
  $f$. Such function associates with every sequence $s =
  d_1d_2\dots\in\data^\omega$ the word $f(s) = d_1^\omega$ if $d_1$ occurs
  infinitely many times in $s$, otherwise $f(s) = s$ itself.

  The function $f$ is not continuous. Indeed, by taking $d \neq d'$, the
  sequence of data words $d(d')^nd^\omega$ converges to $d(d')^\omega$, while
  $f(d(d')^nd^\omega) = d^\omega$ converges to $d^\omega \neq f(d(d')^\omega) =
  d(d')^\omega$.

  Moreover, $f$ is realisable by some \nrt which non-deterministically guesses
  whether $d_1$ repeats infinitely many times or not. It needs only one register
  $r$ in which to store $d_1$. In the first case, it checks whether the current
  data $d$ is equal the content $r$ infinitely often, and in the second case, it
  checks that this test succeeds finitely many times, using B\"uchi conditions.

  One can show that the register transducer $T_\mathsf{rename2}$ considered in
  \autoref{ex2} also realises a function which is not continuous, as the
  value stored in register $r_2$ may appear arbitrarily far in the input word.
  One could modify the specification to obtain a continuous function as follows.
  Instead of considering an infinite log, one considers now an infinite sequence
  of finite logs, separated by $\$$ symbols. The register transducer
  $T_\mathsf{rename3}$, depicted in \autoref{fig:Trename3}, defines such a
  function.
\end{example}

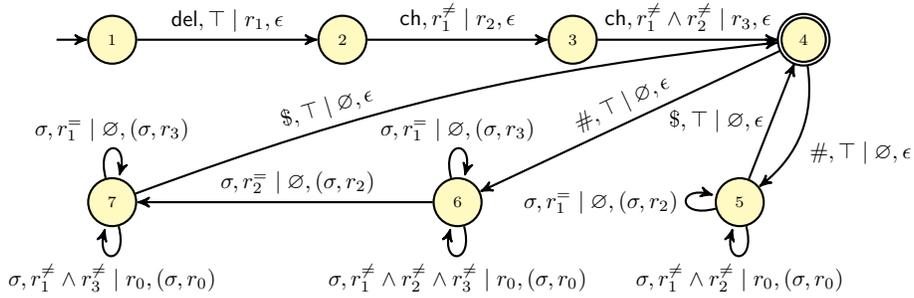
\begin{figure}[htb]
  \centering
  \begin{tikzpicture}[->,>=stealth',auto,node
    distance=2.4cm,thick,scale=0.9,every node/.style={scale=0.85}]
    \tikzstyle{every state}=[text=black, font=\scriptsize,
    fill=yellow!30,minimum size=7.5mm]

    \tikzstyle{input char}=[text=Red3] \tikzstyle{output char}=[text=Green4]

    \node[state, initial, initial text={}] (i) {1};
    \node[state, right=of i] (p) {2};
    \node[state, right=of p] (q) {3};
    \coordinate (pq) at ($(p)!0.5!(q)$);
    \node[state, right=of q,accepting] (r) {4};
    \node[state, below=1.5cm of i] (v) {7};
    \node[state] at (pq |- v) (u) {6};
    \node[state, xshift=-1cm] at (v-|r) (t) {5};

    \path (i) edge node {$\mathsf{del},\top \mid r_1,\epsilon $} (p);
    \path (p) edge node {$\mathsf{ch},r_1^{\neq} \mid r_2,\epsilon $} (q);
    \path (q) edge node {$\mathsf{ch},r_1^{\neq} \wedge r_2^{\neq} \mid r_3,\epsilon $} (r);
    \path (r) edge [bend left] node {$\#,\top\mid \varnothing,\epsilon $} (t);
    \path (r) edge node[above,sloped] {$\#,\top\mid \varnothing,\epsilon $} (u);
     \path (u) edge node[above,xshift=0.2cm] {$\sigma,r_2^{=} \mid
       \varnothing,(\sigma,r_2) $} (v);

     \path (t) edge[loop left] node {$\sigma,r_1^{=} \mid \varnothing,(\sigma,r_2) $}
     (t);
     \path (t) edge[loop below] node {$\sigma,r_1^{\neq} \wedge r_2^{\neq} \mid r_0,(\sigma, r_0) $} (t);
     \path (u) edge[loop above] node {$\sigma,r_1^{=}
       \mid \varnothing,(\sigma,r_3) $} (u);
     \path (u) edge[loop below] node {$\sigma,r_1^{\neq} \wedge r_2^{\neq} \wedge r_3^{\neq} \mid r_0,(\sigma, r_0) $} (u);
     \path (v) edge[loop above] node {$\sigma,r_1^{=} \mid \varnothing,(\sigma,r_3) $}
     (v);
     \path (v) edge[loop below] node {$\sigma, r_1^{\neq} \wedge r_3^{\neq} \mid r_0,(\sigma, r_0) $} (v);
     \path (v) edge [bend left=7] node[above,sloped,pos=0.3] {$\$,\top\mid
       \varnothing,\epsilon $} (r);
     \path (t) edge node[left] {$\$,\top\mid \varnothing,\epsilon $} (r);
   \end{tikzpicture}
   \caption{A register transducer $T_\mathsf{rename3}$. This transducer is
     non-deterministic, yet it defines a continuous
     function.}\label{fig:Trename3}
 \end{figure}

 We now prove the equivalence between continuity and computability for functions
 defined by \nrt. One direction, namely the fact that computability implies
 continuity, is easy, almost by definition. For the other direction, we rely on
 the following lemma which states that it is decidable whether a word $v$ can be
 safely output, only knowing a prefix $u$ of the input. In particular, given a
 function $f$, we let $\hat{f}$ be the function defined over all finite prefixes
 $u$ of words in $\dom(f)$ by $\hat{f}(u) = \bigwedge(f(uy)\mid uy\in\dom(f))$,
 the longest common prefix of all outputs of continuations of $u$ by $f$. Then,
 we have the following decidability result:
 \begin{lemma}
   \label{lem:updateout}
   The following problem is decidable. Given an $\nrt$ $T$ defining a function
   $f$, two finite data words $u\in (\Sigma\times \data)^*$ and $v\in
   (\Gamma\times \data)^*$, decide whether $v\preceq \hat{f}(u)$.
 \end{lemma}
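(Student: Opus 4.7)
The plan is to reduce the question to the emptiness of a non-deterministic register automaton, which is decidable in \PSpace. Observe that $v \not\preceq \hat{f}(u)$ holds iff there exists some continuation $y \in (\Sigma\times\data)^\omega$ with $uy\in \dom(f)$ such that $v \not\preceq f(uy)$, i.e., $T$ admits an accepting run $\rho$ on $uy$ whose output mismatches $v$ at some position $j \leq \length{v}$.

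I will build an \nra $A$ whose non-emptiness captures this existential statement. Over an alphabet extended with a fresh separator $\#$, $A$ accepts data words of the form $u^\sharp \cdot \# \cdot v^\sharp \cdot \# \cdot w$, where, in a preamble phase, $u^\sharp \cdot v^\sharp$ is forced (via transitions tailored to the combined label sequence and equality pattern of $u$ and $v$) to be a copy of $u$ and $v$ up to data renaming; this loads their distinct data values into dedicated registers of $A$. Then, over the suffix $w$, $A$ simulates an accepting run $\rho$ of $T$ on $w$ using a separate block of registers (following the construction used, e.g., in \autoref{lem:otimes}); it verifies that the first $\length{u}$ positions of $w$ agree, in both label and data, with $u^\sharp$ by comparing against the preamble registers. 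In parallel, $A$ maintains a counter bounded by $\length{v}$ tracking the number of output symbols emitted by $\rho$, and nondeterministically guesses a mismatch position $j \leq \length{v}$ at which to check, using the preamble registers for $v^\sharp$, that $\rho$'s $j$-th output symbol differs in label or in data from the corresponding position of $v^\sharp$.

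Since the semantics of any \nrt are invariant under bijective data renamings fixing $\dz$, $L(A) \neq \varnothing$ iff a bad continuation exists for the original $u$ and $v$, hence iff $v \not\preceq \hat{f}(u)$: any accepted word yields, by relabeling, a bad continuation for $u,v$; conversely, the word $u \cdot \# \cdot v \cdot \# \cdot uy$ for any bad continuation $y$ belongs to $L(A)$. Because \nra emptiness is decidable (and \PSpace-complete)~\cite{DBLP:journals/tocl/DemriL09}, the problem is decidable. The main technical obstacle is that an \nra cannot compare the current data to an arbitrary specific constant of $\data$, which is needed for checking mismatches against the fixed data values of $v$; the ghost preamble trick resolves this by explicitly reading a copy of $v$ upfront and storing its values in dedicated registers of $A$ before any mismatch comparison is performed.
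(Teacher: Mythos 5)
Your overall route is the same as the paper's: negate the question, observe that $v\not\preceq \hat f(u)$ iff some accepting run of $T$ on a word extending $u$ produces an output mismatching $v$ at a position $\leq\length{v}$, and reduce this to \nra emptiness. Where you differ is in how the fixed constants $u$ and $v$ are made available to the automaton: the paper pre-loads $\dt(u)$ and $\dt(v)$ into $O(\length{u}+\length{v})$ registers of an \emph{arbitrary initial configuration} (emptiness remains decidable in that setting), whereas you read a ghost copy $u^\sharp\#v^\sharp$ with the same equality type and invoke invariance under data bijections fixing $\dz$. That alternative is legitimate, provided the preamble also pins down equality/inequality of each position with $\dz$ (testable against a register kept at $\dz$); otherwise the partial map $u^\sharp v^\sharp\mapsto uv$ need not extend to a bijection fixing $\dz$ and the transfer of runs fails.

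The genuine gap is in the mismatch check. In your construction $A$ reads only the input suffix $w=uy$ (your converse direction makes this explicit), so the output of the simulated run of $T$ is never materialised in the word $A$ reads: the datum emitted at output position $j$ is the \emph{content of a register} of the simulated $T$, and comparing it with the preamble register holding $\dt(v^\sharp[j])$ is a register-to-register test, which the \nra model of the paper does not provide --- tests only compare the \emph{current input datum} with registers. Moreover, since $j$ is guessed and the emitted datum may have been stored many steps earlier, you cannot simply perform the test at the moment that datum is read. The paper avoids this exactly by the device of \autoref{lem:otimes}, which you cite but do not actually use: $A$ reads the interleaved trace $x_1y_1x_2y_2\dots$, so the output data occur as current input data and can be tested against the registers holding $v$. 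Your proof is repaired either by switching to that trace encoding, or by maintaining in the finite state the full equality type between the simulated registers of $T$ and the preamble registers (updated at each assignment, as in the composition construction); as written, however, the decisive step does not go through in the stated automaton model.
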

\begin{proof}
  We decide the negation, i.e., whether there exists some continuation $y$ such
  that $uy\in\dom(f)$ and $v\not\preceq f(uy)$. Since $f(uy)$ is infinite, it is
  equivalent to asking whether there is a mismatch between $v$ and $f(uy)$ (a
  position $i$ such that $v[i]\neq f(uy)[i]$). We reduce this test to the
  language emptiness test of some \nra $A$. The language defined by $A$ is the
  set of data words $x_1y_1x_2y_2\dots$ such that:
  \begin{enumerate}
  \item for all $i\geq 1$, $x_i\in\Sigma\times \data$, i.e. $x_i$ is some input
  \item for all $j\geq 1$, $y_j\in (\Gamma\times \data)^*$, i.e. $y_j$ is some
    output data word
  \item $x_1\dots x_k = u$ for some $k\geq 1$
  \item $y_1\dots y_\ell$ mismatches with $v$ for some $\ell$
  \item there exists an accepting run of $T$ of the form:
    $$
    (q_0,\tau_0)\myxrightarrow[T]{x_1\mid y_1}
    (q_1,\tau_1)\myxrightarrow[T]{x_2\mid y_2} (q_2,\tau_2)\dots
    $$
  \end{enumerate}
  Note that $L(A)\neq \varnothing$ iff $v\not\preceq \hat{f}(u)$.

  It remains to show how to construct $A$. We construct $A$ by taking the
  intersection of three $\nra$ $A_3,A_4$ and $A_5$ which checks conditions
  $3,4,5$ ($\nra$ are closed under language intersection). To construct $A_3$,
  since $u$ is given, we can use as many registers as $|u|$ to store the data of
  $u$ in some initial register configuration, used to check condition $3$, by
  performing equality tests (the finite labels of $u$ can be dealt with by
  storing them in the states of $A_3$). This automaton needs $O(|u|)$ states. We
  can proceed similarly for $A_4$, with the difference that at some point, $A_4$
  has to guess the existence of some mismatch and perform some disequality test.
  Finally, $A_5$ just simulates $T$ over $x_1y_1\dots$.

  Since \nra emptiness is decidable~\cite{DBLP:journals/tocl/DemriL09} (even
  with some arbitrary initial configuration), this entails our result.
\end{proof}

 \begin{theorem} Let $f$ be a function defined by some \nrt $T$. Then $f$ is
   continuous iff $f$ is computable.
 \end{theorem}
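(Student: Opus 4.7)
The plan is to handle the two directions separately, the computable $\Rightarrow$ continuous direction being almost immediate from determinism. Assume $M$ computes $f$, and fix $x\in\dom(f)$ and $i\ge 0$. Since $(M(x,k))_{k}$ converges to $f(x)$, pick $j$ such that the first $i$ symbols of $f(x)$ are written on $M(x,j)$. Because $M$ is deterministic and the input tape is one-way, $M(y,j)=M(x,j)$ for every $y$ that agrees with $x$ on the first $j$ positions; and because the output tape is write-only, $M(y,j)\preceq f(y)$ for any such $y\in\dom(f)$. Hence $\length{x\wedge y}\ge j$ implies $\length{f(x)\wedge f(y)}\ge i$, which is precisely condition (b) of continuity.

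For the converse, my plan is to design a deterministic multi-tape machine $M$ that, after reading a prefix $u$ of the input, outputs the longest prefix of $\hat{f}(u)$ it can certify, using \autoref{lem:updateout} as the effective oracle. Concretely, $M$ maintains the pair $(u,v)$ of input read so far and output produced so far, and runs the following loop: enumerate candidate letters $(\gamma,d)\in\Gamma\times(\{\dz\}\cup\dt(u))$ in a fixed order and, for each, invoke \autoref{lem:updateout} to decide whether $v\cdot(\gamma,d)\preceq\hat{f}(u)$; as soon as one succeeds, append it to $v$ and restart the enumeration; when no candidate succeeds, read one more input symbol, append it to $u$, and re-enter the loop. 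At most one candidate can succeed at any step, for otherwise two distinct one-letter extensions would both be common prefixes of every $f(uy)$, contradicting the definition of $\hat{f}(u)$ as the longest common prefix.

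Two things must then be verified. Soundness follows by construction: $v\preceq\hat{f}(u)\preceq f(x)$ at every moment, so $M$ only ever writes a true prefix of $f(x)$. For convergence, given $i\ge 0$, continuity at $x$ yields $j$ such that every $y\in\dom(f)$ with $\length{x\wedge y}\ge j$ satisfies $\length{f(x)\wedge f(y)}\ge i$; taking $u=x[1{:}j]$, every continuation $y$ with $uy\in\dom(f)$ satisfies $f(uy)[1{:}i]=f(x)[1{:}i]$, so $\hat{f}(u)$ has at least $i$ symbols agreeing with $f(x)$, and the inner loop will commit to them before reading further input.

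The main obstacle, which I expect to be the technical heart of the argument, is justifying that enumerating over the finite set $\{\dz\}\cup\dt(u)$ is exhaustive, i.e.\ that every data value appearing in $\hat{f}(u)$ lies in this set. This relies on the NRT semantics combined with the indistinguishability argument behind \autoref{prop:indistinguishability}: output data are always drawn from register contents, and registers are initialised to $\dz$ and thereafter loaded from the input. If some position of $\hat{f}(u)$ carried a data value $d\notin\{\dz\}\cup\dt(u)$, that value would have to be loaded from a continuation $y$; renaming $d$ in $y$ by a fresh data value untouched by the remainder of the run would, by indistinguishability, produce a legal distinct continuation disagreeing at that position, contradicting the definition of $\hat{f}(u)$. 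Combined with the decidability provided by \autoref{lem:updateout}, this ensures the loop is effective and that $M$ indeed computes $f$.
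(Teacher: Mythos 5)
Your proposal is correct and follows essentially the same route as the paper: the same determinism argument for computable $\Rightarrow$ continuous, and for the converse the same machine that greedily extends the output with letters certified via \autoref{lem:updateout} against $\hat{f}(u)$, with soundness from $v \preceq \hat{f}(u) \preceq f(x)$ and progress from continuity. The only (harmless) variation is in justifying the finite candidate set: you argue via a renaming/indistinguishability argument that every datum of $\hat{f}(u)$ already occurs in $\{\dz\}\cup\dt(u)$, whereas the paper merely observes that output data must occur in some prefix of $x$ and waits until the required datum has been read.
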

 \begin{proof}
   $\Leftarrow$ Assuming $f=\rel{T}$ is computable by some Turing machine $M$,
   we show that $f$ is continuous. Indeed, consider some $x\in\dom(f)$, and some
   $i\ge 0$. As the sequence of finite words $(M(x,k))_{k\in\mathbb{N}}$
   converges to $f(x)$ and these words have non-decreasing lengths, there exists
   $j\ge0$ such that $|M(x,j)|\ge i$. Hence, for any data word $y\in\dom(f)$
   such that $|x\wedge y|\ge j$, the behaviour of $M$ on $y$ is the same during
   the first $j$ steps, as $M$ is deterministic, and thus $|f(x)\wedge f(y)|\ge
   i$, showing that $f$ is continuous at $x$.

   $\Rightarrow$ Assume that $f$ is continuous. We describe a Turing machine
   computing $f$; the corresponding algorithm is formalised as
   Algorithm~\ref{alg:computeContinuous}. When reading a finite prefix $x[{:}j]$
   of its input $x\in\dom(f)$, it computes the set $P_j$ of all configurations
   $(q,\tau)$ reached by $T$ on $x[{:}j]$. This set is updated along taking
   increasing values of $j$. It also keeps in memory the finite output word
   $o_j$ that has been output so far. For any $j$, if $\dt(x[{:}j])$ denotes the
   data that appear in $x$, the algorithm then decides, for each input
   $(\sigma,d)\in \Sigma\times (\dt(x[{:}j])\cup \{\dz\})$ whether $(\sigma,d)$
   can safely be output, i.e., whether all accepting runs on words of the form
   $x[{:}j]y$, for an infinite word $y$, outputs at least $o_j(\sigma,d)$. The
   latter can be decided, given $T$, $o_j$ and $x[{:}j]$, by
   \autoref{lem:updateout}. Note that it suffices to look at data in
   $\dt(x[{:}j])\cup \{\dz\}$ only since, by definition of \nrt, any data that
   is output is necessarily stored in some register, and therefore appears in
   $x[{:}j]$ or is equal to $\dz$.
   \begin{algorithm}[ht]
     \label{alg:computeContinuous}
     \SetAlgoLined
     \KwData{$x \in \dom(f)$} $o:=\epsilon$ \;
     \For{$j = 0$ \KwTo $\infty$}{
       \For{$(\sigma,d)\in \Sigma\times (\dt(x[{:}j])\cup \{\dz\})$}{
         \If(\tcp*[h]{such test is decidable by \autoref{lem:updateout}}){$o.(\sigma,d)\preceq \hat{f}(x[{:}j])$}{
           $o := o.(\sigma,d)$\; output $(\sigma,d)$\; } } }
     \caption{Algorithm describing the machine $M_f$ computing $f$.}
   \end{algorithm}

   Let us show that $M_f$ actually computes $f$. Let $x\in\dom(f)$. We have to
   show that the sequence $(M_f(x,j))_j$ converges to $f(x)$. Let $o_j$ be the
   content of variable $o$ of $M_f$ when exiting the inner loop at line 8, when
   the outer loop (line 2) has been executed $j$ times (hence $j$ input symbols
   have been read). Note that $o_j = M_f(x,j)$. We have $o_1\preceq o_2\preceq
   \dots$ and $o_j\preceq \hat{f}(x[{:}j])$ for all $j\geq 0$. Hence,
   $o_j\preceq f(x)$ for all $j\geq 0$. To show that $(o_j)_j$ converges to
   $f(x)$, it remains to show that $(o_j)_j$ is non-stabilising, i.e.
   $o_{i_1}\prec o_{i_2}\prec\dots$ for some infinite subsequence
   $i_1<i_2<\dots$. First, note that $f$ being continuous is equivalent to the
   sequence $(\hat{f}(x[{:}k]))_k$ converging to $f(x)$. Therefore we have that
   $f(x)\wedge \hat{f}(x[{:}k])$ can be arbitrarily long, for sufficiently large
   $k$. Let $j\geq 0$ and $(\sigma,d) = f(x)[|o_j|+1]$. By the latter property
   and the fact that $o_j.(\sigma,d)\preceq f(x)$, necessarily, there exists
   some $k>j$ such that $o_j.(\sigma,d)\preceq \hat{f}(x[{:}k])$. Moreover, by
   definition of \nrt, $d$ is necessarily a data that appears in some prefix of
   $x$, therefore there exists $k'\geq k$ such that $d$ appears in $x[{:}k']$
   and $o_j.(\sigma,d)\preceq \hat{f}(x[{:}k]\preceq \hat{f}(x[{:}k']$. This
   entails that $o_j.(\sigma,d)\preceq o_{k'}$. So, we have shown that for all
   for all $j$, there exists $k'>j$ such that $o_j\prec o_{k'}$, which concludes
   the proof.
 \end{proof}

 Now that we have shown that computability is equivalent with continuity for
 functions defined by \nrt, we exhibit a pattern which allows to decide
 continuity. Such pattern generalises the one
 of~\cite{DBLP:journals/corr/abs-1906-04199} to the setting of data words, the
 difficulty lying in showing that our pattern can be restricted to a finite
 number of data.
 \begin{theorem}
   \label{thm:characCont}
   Let $T$ be a functional \nrt with $k$ registers. Then, for all $X \subseteq
   \data$ such that $\size{X} \geq 2k+3$ and $\dz \in
   X$, 
   $T$ is not continuous at some $x \in (\Sigma \times \data)^\omega$ if and
   only if $T$ is not continuous at some $z \in (\Sigma \times X)^\omega$.
 \end{theorem}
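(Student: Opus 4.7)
The direction $(\Leftarrow)$ is immediate since $(\Sigma \times X)^\omega \subseteq (\Sigma \times \data)^\omega$. For $(\Rightarrow)$, assume $T$ is not continuous at some $x \in \dom(f)$: condition $(b)$ of continuity yields an $i \in \N$ such that for every $j \in \N$ there exists $y^j \in \dom(f)$ with $\length{x \wedge y^j} \geq j$ and $\length{f(x) \wedge f(y^j)} \leq i$. Mirroring the proof of \autoref{thm:functionality}, the plan is to express this non-continuity via non-emptiness of a family of $\nra$, apply \autoref{prop:indistinguishability} to restrict the data values, and then produce a single point of non-continuity in $(\Sigma \times X)^\omega$.

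For each $j$, by \autoref{lem:otimes} and \autoref{lem:mismatch}, the language $L_j := L_\otimes(T) \cap M_j^i$ of interleavings $\rho_1 \otimes \rho_2$ of accepting runs of $T$ whose inputs agree on the first $j$ positions and whose outputs share a common prefix of length at most $i$ is recognised by an $\nra$ with $2k+2$ registers. The hypothesis ensures $L_j \neq \varnothing$ (via accepting runs of $T$ on $x$ and on $y^j$). Hence, whenever $\size{X} \geq 2k+3$ and $\dz \in X$, \autoref{prop:indistinguishability} yields, for each $j$, pairs $x^j, y^j \in \dom(f) \cap (\Sigma \times X)^\omega$ such that $\length{x^j \wedge y^j} \geq j$ and $\length{f(x^j) \wedge f(y^j)} \leq i$.

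To conclude, one needs a single point $z \in (\Sigma \times X)^\omega \cap \dom(f)$ at which $T$ is not continuous. Since $\Sigma \times X$ is finite, $(\Sigma \times X)^\omega$ is compact, so one can extract a subsequence $(j_k)$ with $x^{j_k} \to z \in (\Sigma \times X)^\omega$, and then automatically $y^{j_k} \to z$ since $\length{x^{j_k} \wedge y^{j_k}} \geq j_k$. A short argument on common prefixes shows that if $z \in \dom(f)$, then for each $k$ at least one of $\length{f(z) \wedge f(x^{j_k})} \leq i$ or $\length{f(z) \wedge f(y^{j_k})} \leq i$ must hold (otherwise $f(x^{j_k})$ and $f(y^{j_k})$ would both agree with $f(z)$ beyond position $i$, contradicting $\length{f(x^{j_k}) \wedge f(y^{j_k})} \leq i$), so a further subsequence yields a sequence in $\dom(f)$ converging to $z$ with output mismatch bounded by $i$, witnessing non-continuity at $z$. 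The main obstacle is therefore to guarantee $z \in \dom(f)$, since $\dom(f)$ is only Büchi-recognisable ($G_\delta$ but in general not closed). I would address this by further restricting the $\nra$ for $L_j$ to enforce that the first accepting run follows an accepting lasso of $T$ of bounded size (always possible for a non-empty Büchi $\nra$, with shape polynomial in $\size{Q}$ and $\size{X}$); a pigeonhole argument over the finitely many $X$-labelled lasso skeletons then stabilises the witnesses to a single lasso $z = u w^\omega \in \dom(f) \cap (\Sigma \times X)^\omega$, closing the proof.
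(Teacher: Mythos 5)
Your reduction to finitely many data is essentially the paper's: encode pairs of runs via $L_\otimes(T)\cap M_j^i$ and invoke \autoref{prop:indistinguishability}; the $(\Leftarrow)$ direction is indeed immediate. The compactness step and the observation that at least one of $f(x^{j_k})$, $f(y^{j_k})$ must mismatch with $f(z)$ within the first $i$ positions are also correct, and they cleanly reduce the whole problem to showing that the limit $z$ lies in $\dom(f)$. But that last point is exactly where the real work is, and your proposed fix does not go through. The \nra recognising $L_j$ contains a $j$-bounded counter (coming from $M_j^i$), so a lasso witness extracted from its non-emptiness has a prefix whose length grows with $j$; consequently the set of ``$X$-labelled lasso skeletons'' is not finite uniformly in $j$, and the pigeonhole argument has nothing to stabilise (even fixing the loop part, the prefixes $u_k$ diverge and their limit need not be ultimately periodic, hence need not be in the B\"uchi-recognisable, non-closed set $\dom(f)$). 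More fundamentally, what you need is not a lasso for the interleaved automaton but a short accepting lasso for the \emph{first run alone} that remains compatible with a second run agreeing on an input prefix of length at least $j$ and mismatching on output before position $i$; producing such a witness is not a generic fact about non-empty B\"uchi automata, it is the synchronized pumping argument that constitutes the bulk of the paper's proof.

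Concretely, the paper additionally requires the witness $\rho'_1\otimes\rho'_2$ to contain $m=(2i+3)(K+1)$ occurrences of an accepting state $q_f$ in $\rho'_1[1{:}j]$ (with $K=\size{Q}\cdot\size{X}^{2k}$), argues that any segment of $\rho'_1$ with $\size{Q}\cdot\size{X}^{k}$ occurrences of $q_f$ must contain a productive transition (otherwise some input would have a finite output), uses this to locate the output mismatch within a bounded input prefix, and then finds, \emph{after} that prefix, a repeated pair of configurations occurring simultaneously in both runs, with $q_f$ on the loop of the first. Pumping this synchronized loop yields $z=z_P\cdot z_R^\omega\in\dom(f)$ together with $z_n=z_P\cdot z_R^n\cdot z_C\in\dom(f)$ converging to $z$ whose images all carry the mismatch. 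Without some version of this argument, your proof has a genuine gap at the step ``the limit point belongs to $\dom(f)$''.
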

 \begin{proof}
   The right-to-left direction is trivial. Now, let $T$ be a functional $\nrt$
   with $k$ registers which is not continuous at some $x \in (\Sigma \times
   \data)^\omega$. Let $f: \dom(\rel{T})
   \rightarrow (\Gamma \times \data)^\omega$ be the function defined by $T$,
   as: for all $u \in \dom(\rel{T}), f(u) = v$ where $v \in (\Gamma \times
   \data)^\omega$ is the unique data word such that $(u,v) \in \rel{T}$.

  
   Now, let $X \subseteq \data$ be such that $\size{X} \geq 2k+3$ and $\dz \in
   X$. We need to build two words $u$ and $v$ labelled over $X$ 
   which coincide on a sufficiently long prefix to allow for pumping, hence
   yielding a converging sequence of input data words whose images do not
   converge, witnessing non-continuity. To that end, we use a similar proof
   technique as for~\autoref{thm:functionality}: we show that the language of
   interleaved runs whose inputs coincide on a sufficiently long prefix while
   their respective outputs mismatch before a given position is recognisable by
   an \nra, allowing us to use the indistinguishability property. We also ask
   that one run presents sufficiently many occurrences of a final state $q_f$,
   so that we can ensure that there exists a pair of configurations containing
   $q_f$ which repeats in both runs.

   On reading such $u$ and $v$, the automaton behaves as a finite
   automaton, since the number of data is finite (\cite[Proposition
1]{Kaminski:1994:FA:194527.194534}). By analysing the respective runs,
   we can, using pumping arguments, bound the position on which the mismatch
   appears in $u$, then show the existence of a synchronised loop over $u$ and
   $v$ after such position, allowing us to build the sought witness for
   non-continuity.

   \myparagraph{Relabel over $X$} Thus, assume $T$ is not continuous at
   some point $x \in (\Sigma \times \data)^\omega$. Let $\rho$ be an accepting
   run of $T$ over $x$, and let $q_f \in \inf(\states(\rho)) \cap F$ be an
   accepting state repeating infinitely often in $\rho$. Then, let $i \geq 0$ be
   such that for all $j \geq 0$, there exists $y \in \dom(f)$ such that
   $\length{x \wedge y} \geq j$ but $\length{f(x) \wedge f(y)} \leq i$. Now,
   define $K = \size{Q} \times (2k+3)^{2k}$ and let $m = (2i+3) \times (K+1)$.
   Finally, pick $j$ such that $\rho[1{:}j]$ contains at least $m$ occurrences
   of $q_f$. Consider the language:
   \begin{align*}
     L = \bigl\{& \rho_1 \otimes \rho_2 \big\lvert \length{\projin(\rho_1) \wedge \projin(\rho_2)} \geq j, \length{\projout(\rho_1) \wedge \projout(\rho_2)} \leq i \text{ and } \\
                & \text{there are at least $m$
                  occurrences of $q_f$ in }\rho_1[1{:}j] \bigr\}
   \end{align*}
   By \autoref{lem:otimes}, $L_\otimes(T)$ is recognised by an \nra with $2k$
   registers. Additionnally, by \autoref{lem:mismatch}, $M_j^i$ is recognised
   by an \nra with $2$ registers. Thus, $L = L_\otimes(T) \cap O^{q_f}_{m,j}
   \cap M_j^i$, where $O^{q_f}_{m,j}$ checks there are at least $m$ occurrences
   of $q_f$ in $\rho_1[1{:}j]$ (this is easily doable from the automaton
   recognising $L_\otimes(T)$ by adding an $m$-bounded counter), is recognisable
   by an \nra with $2k+2$ registers.

   Choose $y \in \dom(f)$ such that $\length{x \wedge y} \geq j$ but
   $\length{f(x) \wedge f(y)} \leq i$. By letting $\rho_1$ (resp.
   $\rho_2$) be an accepting run of $T$ over $x$ (resp. $y$) we have $\rho_1 \otimes \rho_2 \in L$, so $L \neq \varnothing$. By
   \autoref{prop:indistinguishability}, it means $L \cap ((\Sigma
   \times X)^\omega \times (\Gamma \times X)^\omega) \neq \varnothing$. Let $w =
   \rho'_1 \otimes \rho'_2 \in L \cap ((\Sigma \times X)^\omega \times (\Gamma
   \times X)^\omega)$, $u = \projin(\rho'_1)$ and $v = \projin(\rho'_2)$. Then,
   $\length{u \wedge v} \geq j$, $\length{f(u) \wedge f(v)} \leq i$
   and there are at least $m$ occurrences of $q_f$ in $\rho_1[1{:}j]$.

   Now, we depict $\rho'_1$ and $\rho'_2$ in \autoref{fig:runsDiscontAsync},
   where we decompose $u$ as $u = u_1 \dots u_m \cdot s$ and $v$ as $v = u_1
   \dots u_m \cdot t$; their corresponding images being respectively $u' = u'_1
   \dots u'_{m} \cdot s'$ and $u'' = u''_1 \dots u''_{m} t''$. We also let $l =
   (i+1)(K+1)$ and $l' = 2(i+1)(K+1)$. Since the data of $u,v$ and $w$ belong to
   $X$, we know that $\tau_i, \mu_i : R \rightarrow X$.

  \begin{figure}[ht]
    \centering \resizebox{\textwidth}{!}{%
      \begin{tikzpicture}[->,>=stealth',auto,node
        distance=1.5cm,thick,scale=0.9,every node/.style={scale=0.85}]
        \tikzstyle{every state}=[text=black, fill=yellow!30, minimum size=1cm,
        initial text={}, font=\scriptsize]

        \node[state, initial] (q0) {$i_0,\dz^R$}; \node[right= of q0] (dot1)
        {$\dots$}; \node[state, right= of dot1,accepting] (qfk) {$q_f,\mu_{l}$};
        \draw [-,decorate,decoration={brace,amplitude=10pt,mirror,raise=0.3cm}]
        (q0) -- (qfk) node [black,midway,yshift=-1.3cm] {$(i+1)(K+1)$
          occurrences of $q_f$}; \node[right= of qfk] (dot2) {$\dots$};
        \node[state, right= of dot2,accepting] (qfl) {$q_f,\mu_{l'}$}; \draw
        [-,decorate,decoration={brace,amplitude=10pt,mirror,raise=0.3cm}] (qfk)
        -- (qfl) node [black,midway,yshift=-1.3cm] {$(i+1)(K+1)$ occurrences of
          $q_f$}; \node[right= of qfl] (dot3) {$\dots$}; \node[state, right= of
        dot3,accepting] (qfm) {$q_f,\mu_{m}$}; \draw
        [-,decorate,decoration={brace,amplitude=10pt,mirror,raise=0.3cm}] (qfl)
        -- (qfm) node [black,midway,yshift=-1.3cm] {$(K+1)$ occurrences of
          $q_f$};
        \node[right= of qfm] (tar1) {}; \node[state, initial,below=1.1cm of q0] (qp0)
        {$i_0,\dz^R$}; \node at (qp0-|dot1) (dot4) {$\dots$}; \node[state] at
        (qp0-|qfk) (pk) {$q_{l},\tau_{l}$}; \node at (qp0-|dot2) (dot5)
        {$\dots$}; \node[state] at (qp0-|qfl) (pl) {$q_{l'},\tau_{l'}$}; \node
        at (qp0-|dot3) (dot6) {$\dots$}; \node[state] at (qp0-|qfm) (pm)
        {$q_{m},\tau_{m}$};
        \node at (qp0-|tar1) (tar2) {};

        \path (q0) edge node {$u_1 \mid u'_1$} (dot1); \path (dot1) edge node
        {$u_{l} \mid u_{l}'$} (qfk); \path (qfk) edge node {$u_{l+1} \mid
          u_{l+1}'$} (dot2); \path (dot2) edge node {$u_{l'} \mid u_{l'}'$}
        (qfl); \path (qfl) edge node {$u_{l'+1} \mid u_{l'+1}'$} (dot3); \path
        (dot3) edge node {$u_m \mid u_m'$} (qfm); \path (qfm) edge node {$s|s'$}
        (tar1);

        \path (qp0) edge node {$u_1 \mid u''_1$} (dot4); \path (dot4) edge node
        {$u_{l} \mid u_{l}''$} (pk); \path (pk) edge node {$u_{l+1} \mid
          u_{l+1}''$} (dot5); \path (dot5) edge node {$u_{'} \mid u_{l'}''$}
        (pl); \path (pl) edge node {$u_{l'+1} \mid u_{l'+1}''$} (dot6); \path
        (dot6) edge node {$u_m \mid u_m''$} (pm);
        \path (pm) edge node {$t|t''$} (tar2);
      \end{tikzpicture}%
    }
    \caption{Runs of $f$ over $u = u_1 \dots u_m\cdot s$ and $v = u_1 \dots u_m
      \cdot t$.}
    \label{fig:runsDiscontAsync}
  \end{figure}
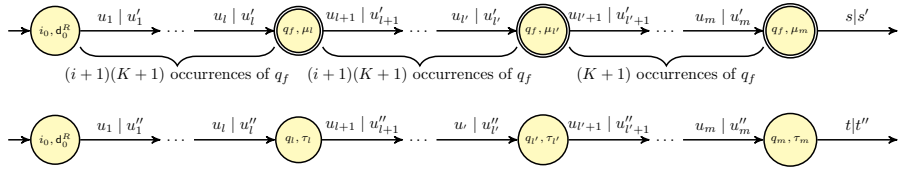

  \myparagraph{Repeating configurations} First, let us observe that in a partial
  run of $\rho'_1$ containing more than $\size{Q} \times \size{X}^k$ occurrences
  of $q_f$, there is at least one productive transition, i.e. a transition whose
  output is $o \neq \varepsilon$. Otherwise, by the pigeonhole principle, there
  exists a configuration $\mu : R \rightarrow X$ such that $(q_f, \mu)$ occurs
  at least twice in the partial run. Since all transitions are improductive, it
  would mean that, by writing $w$ the corresponding part of input, we have
  $(q_f, \mu) \myxrightarrow[T]{w \mid \varepsilon} (q_f, \mu)$. This partial
  run is part of $\rho'_1$, so, in particular, $(q_f, \mu)$ is accessible, hence
  by taking $w_0$ such that $(i_0,\tau_0) \myxrightarrow[T]{w_0 \mid w'_0}
  (q_f,\mu)$, we have that $f(w_0 w^\omega) = w'_0$, which is a finite word,
  contradicting our assumption that all accepting runs produce an infinite
  output. This implies that, for any $n \geq \size{Q} \times \size{X}^k$ (in
  particular for $n = l$), $\length{u'_1 \dots u'_n} \geq i+1$.

    \myparagraph{Locate the mismatch} Again, upon reading $u_{l+1} \dots u_{l'}$,
    there are $(i+1)(K+1)$ occurrences of $q_f$. There are two cases:
    \begin{enumerate}[(a)]
    \item There are at least $i+1$ productive transitions in $\rho'_2$. Then, we
      obtain that $\length{u''_1 \dots u''_{l'}} > i$, so
      $\mismatch(u'_1 \dots u'_{l'}, u''_1 \dots u''_{l'})$, since we know
      $\length{f(u) \wedge f(v)} \leq i$ and they are respectively prefixes of
      $f(u)$ and $f(v)$, both of length at least $i+1$.
      Afterwards, upon reading $u_{l'+1} \dots u_m$, there are $K+1 > \size{Q}
      \times \size{X}^{2k}$ occurrences of $q_f$, so, by the pigeonhole
      principle, there is a repeating pair: there exist indices $p$ and $p'$
      such that $l' \leq p < p' \leq m$ and $(q_f, \mu_p) = (q_f, \mu_{p'})$,
      $(q_p, \tau_p) = (q_{p'}, \tau_{p'})$. Thus, let $z_P = u_1 \dots u_{p}$,
      $z_R = u_{p+1} \dots u_{p'}$ and $z_C = u_{p'+1} \dots u_m \cdot t$ ($P$
      stands for \emph{prefix}, $R$ for \emph{repeat} and $C$ for
      \emph{continuation}; we use capital letters to avoid confusion with
      indices). By denoting $z'_P = u'_1 \dots u'_p$, $z'_R = u'_{p+1} \dots
      u'_{p'}$, $z''_P = u''_1 \dots u''_p$, $z''_R = u''_{p+1} \dots u''_{p'}$
      and $z''_C = u''_{p'+1} \dots u''_m \cdot t''$ the corresponding images,
      $z = z_P \cdot {z_R}^\omega$ is a point of discontinuity.
      Indeed, define $(z_n)_{n \in \N}$ as, for all $n \in \N$, $z_n = z_P \cdot
      z_R^n \cdot z_C$. Then, $(z_n)_{n \in \N}$ converges towards $z$,
      but, since for all $n \in \N$, $f(z_n)= z''_P \cdot {z''_L}^n \cdot
      z''_C$, we have that $f(z_n) \not
      \xrightarrow[n \infty]{} f(z) = z'_P \cdot {z'_L}^\omega$, since
      $\mismatch(z'_P,z''_P)$.
    \item Otherwise, by the same reasoning as above, it means there exists a
      repeating pair with only improductive transitions in between: there exist
      indices $p$ and $p'$ such that $l \leq p < p' \leq l'$, $(q_f, \mu_p) =
      (q_f, \mu_{p'})$, $(q_p, \tau_p) = (q_{p'}, \tau_{p'})$, and $(q_f, \mu_p)
      \myxrightarrow{u_{p+1} \dots u_{p'} \mid \varepsilon} (q_f, \mu_{p'})$,
      $(q_p,\tau_p) \myxrightarrow{u_{p+1} \dots u_{p'} \mid \varepsilon}
      (q_{p'}, \tau_{p'})$. Then, by taking $z_P = u_1 \dots u_{p}$, $z_R =
      u_{p+1} \dots u_{p'}$ and $z_C = u_{p'+1} \dots u_m \cdot t$, we have, by
      letting $z'_P = u'_1 \dots u'_p$, $z'_R = u'_{p+1} \dots u'_{p'}$, $z''_P
      = u''_1 \dots u''_p$, $z''_R = \varepsilon$ and $z''_C = u''_{n'+1} \dots
      u''_m \cdot t''$, that $z = z_P \cdot {z_R}^\omega$ is a point of
      discontinuity. Indeed, define $(z_n)_{n \in \N}$ as, for all $n \in \N$,
      $z_n = z_P \cdot z_R^n \cdot z_C$. Then, $(z_n)_{n \in \N}$ indeed
      converges towards $z$, but, since for all $n \in \N$, $f(z_n)= z''_P \cdot
      z''_C$, we have that $f(z_n) \not
      \xrightarrow[n \infty]{} f(z) = z'_P \cdot {z'_R}^\omega$, since
      $\mismatch(z'_P,z''_P \cdot z''_C)$ (the mismatch necessarily lies in
      $z'_P$, since $\length{z'_P} \geq i+1$). \qedhere
    \end{enumerate}
  \end{proof}

  \begin{corollary}
    \label{cor:continuity}
    Deciding whether an \nrt defines a continuous function is \linebreak
    \PSpace-complete.
  \end{corollary}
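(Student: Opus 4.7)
The plan follows the template of \autoref{coro:fun}, with \autoref{thm:characCont} playing the role of \autoref{thm:functionality}.

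For the upper bound, given an \nrt $T$ with $k$ registers, the approach is to first verify that $T$ is functional, which is in \PSpace by \autoref{coro:fun}, and then fix any $X \subseteq \data$ with $|X| = 2k+3$ and $\dz \in X$. By \autoref{thm:characCont}, $T$ is continuous iff $\rel{T} \cap ((\Sigma \times X)^\omega \times (\Gamma \times X)^\omega)$ is continuous. By \autoref{prop:regularRestrictionTrd}, this restriction is recognised by an \nft of exponential size that can be generated on-the-fly in polynomial space. The remaining task is deciding continuity for \nft. I would phrase non-continuity as a pattern-logic formula in the spirit of \autoref{prop:funfinite}: two initial, co-accessible runs sharing the same input prefix whose outputs already mismatch, with the common input loopable so as to be arbitrarily extendable. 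Co-accessibility is handled by the lasso trick from the proof of \autoref{prop:funfinite}, and the output mismatch by the standard $\textsf{mismatch}$ predicate of that logic. Since pattern-logic model checking is in NLogSpace, combining it with the on-the-fly construction yields a \PSpace procedure.

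For the lower bound, I would reduce from non-emptiness of \nra (which is \PSpace-hard by \cite{DBLP:journals/tocl/DemriL09}), mimicking the hardness argument of \autoref{coro:fun}. Given an \nra $A$ over finite words, the plan is to build in polynomial time a functional \nrt $T_A$ whose domain is $\{w \cdot (\#,d) \cdot y \mid w \in L(A),\ d \in \data,\ y \in (\Sigma \times \data)^\omega\}$, and which on such inputs computes the non-continuous function of \autoref{ex:cont}: output $(\sigma, d)^\omega$ for some fixed $\sigma \in \Sigma$ if $d$ recurs infinitely often in $y$, and echo $y$ unchanged otherwise. If $L(A) = \varnothing$, the domain of $T_A$ is empty and $T_A$ is vacuously continuous. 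If $L(A) \neq \varnothing$, picking any witness $w \in L(A)$ gives a discontinuity point, since the sequence $w (\#,d)(a,d')^n (a,d)^\omega$ converges to $w (\#,d)(a,d')^\omega$ while the corresponding images diverge, exactly as in \autoref{ex:cont}. Both the simulation of $A$ and the tail function can be realised with one extra register and a constant-size gadget, keeping the construction polynomial.

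The main obstacle is securing a sound NLogSpace continuity test for the exponential \nft produced on-the-fly; once a pattern-logic expression of non-continuity is established (analogous to the formula in \autoref{prop:funfinite}, with the additional loop requirement on the shared input), the remaining steps are routine adaptations of the proof of \autoref{coro:fun}. Ensuring functionality of $T_A$ uniformly on its whole domain in the hardness reduction is a secondary technical point but poses no structural difficulty.
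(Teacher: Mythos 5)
Your overall architecture is the same as the paper's: restrict to a $(2k+3)$-element data set via \autoref{thm:characCont}, build the exponential \nft on-the-fly via \autoref{prop:regularRestrictionTrd}, run an \NLogSpace continuity test on it to get \PSpace, and for hardness reduce from \nra emptiness by grafting the non-continuous function of \autoref{ex:cont} after a $\#$-separator guarded by $L(A)$ --- the lower-bound construction is essentially identical to the paper's. The one place you diverge is the step you yourself flag as the main obstacle: the paper does not re-derive an \NLogSpace continuity test for \nft, it simply cites the known result of~\cite{DBLP:journals/corr/abs-1906-04199}. Your sketched replacement is not yet a correct characterisation: ``two co-accessible runs sharing the same input prefix whose outputs already mismatch, with the input loopable'' captures only the first disjunct of the actual pattern (cf.\ \autoref{fig:patternCont}). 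You are missing the case where the second run's synchronised loop produces \emph{empty} output ($v''=\varepsilon$) and the mismatch only materialises between $u'$ and the infinite continuation $u''w''$; you also need the first run's loop to visit an accepting state so that $uv^\omega$ is genuinely in the domain. With only your disjunct the test would declare some non-continuous functions continuous. Either cite the known \nft result as the paper does, or complete the pattern with the second disjunct and the B\"uchi condition on the loop; everything else in your proposal goes through.
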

  \begin{proof}
    Let $X \subseteq \data$ be a set of size $2k+3$ containing $\dz$. By
    \autoref{thm:characCont}, $T$ is not continuous iff it is not continuous at
    some $z \in (\Sigma \times X)^\omega$, iff $\rel{T} \cap \bigl( (\Sigma
    \times X)^\omega \times (\Gamma \times X)^\omega \bigr)$ is not continuous.
    By \autoref{prop:regularRestrictionTrd}, such relation is recognisable by a
    finite transducer $T_X$ with $O(\size{Q} \times \size{X}^{\size{R}})$
    states, which can be built on-the-fly.
    By~\cite{DBLP:journals/corr/abs-1906-04199}, the continuity of functions
    defined by \nft is decidable in \NLogSpace, which yields a \PSpace
    procedure.

    For the hardness, we reduce again from the emptiness problem of register
    automata, which is \PSpace-complete~\cite{DBLP:journals/tocl/DemriL09}. Let
    $A$ be a register automaton over some alphabet $\Sigma\times \data$. We
    construct a transducer $T$ which defines a continuous function iff
    $L(A)=\varnothing$ iff the domain of $T$ is empty. Let $f$ be a
    non-continous function realised by some \nrt $H$ (it exists by
    \autoref{ex:cont}). Then, let $\#\not\in \Sigma$ be a fresh symbol, and
    define the function $g$ as the function mapping any data word of the form $w
    (\#,d) w'$ to $w (\#,d) f(w')$ if $w\in L(A)$. The function $g$ is realised
    by an \nrt which simulates $A$ and copies its inputs on the output to
    implement the identity, until it sees $\#$. If it was in some accepting
    state of $A$ before seeing $\#$, it branches to some initial state of $H$
    and proceeds executing $H$. If there is some $w_0\in L(A)$, then the
    subfunction $g_{w_0}$ mapping words of the form $w_0(\#,d)w'$ to $w_0
    (\#,d)f(w')$ is not continuous, since $f$ is not. Hence $g$ is not
    continuous. Conversely, if $L(A) = \varnothing$, then $\dom(g) =
    \varnothing$, so $g$ is continuous.
\end{proof}
In~\cite{DBLP:journals/corr/abs-1906-04199}, non-continuity is characterised by
a specific pattern (Lemma~21, Figure~1), i.e. the existence of some particular
sequence of transitions. By applying this characterisation to the finite
transducer recognising $\rel{T} \cap ((\Sigma \times X)^\omega \times (\Gamma
\times X)^\omega)$, as constructed in
\autoref{prop:regularRestrictionTrd}, we can characterise non-continuity
by a similar pattern, which will prove useful to decide (non-)continuity of
test-free \nrt in \NLogSpace (cf \autoref{sec:testFree}):
\begin{corollary}[\cite{DBLP:journals/corr/abs-1906-04199}]
  Let $T$ be an \nrt with $k$ registers. Then, for all $X \subseteq \data$ such
  that $\size{X} \geq 2k+3$ and $\dz \in X$, $T$ is not continuous at some $x
  \in (\Sigma \times \data)^\omega$ if and only if it has the pattern of
  \autoref{fig:patternCont}.

     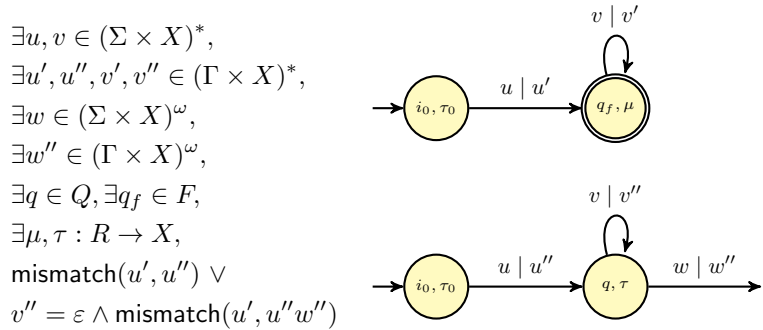
\begin{figure}[ht]
       \centering
       \begin{minipage}{0.4\textwidth}
         \begin{align*}
           & \exists u,v \in (\Sigma \times X)^*, \\
           & \exists u', u'', v', v'' \in (\Gamma \times X)^*, \\
           & \exists w \in (\Sigma \times X)^\omega, \\
           & \exists w'' \in (\Gamma \times X)^\omega, \\
           & \exists q \in Q, \exists q_f \in F, \\
           & \exists \mu, \tau : R \rightarrow X, \\
           &\mismatch(u',u'')~\vee \\
           &v'' = \varepsilon \wedge \mismatch(u',u''w'')
         \end{align*}
       \end{minipage}%
       \begin{minipage}{0.6\textwidth}
         \begin{tikzpicture}[->,>=stealth',auto,node
           distance=1.5cm,thick,scale=0.9,every node/.style={scale=0.85}]
           \tikzstyle{every state}=[text=black, fill=yellow!30,
           font=\scriptsize, minimum size=1cm]

           \node[state, initial, initial text={}] (p1) {$i_0,\tau_0$};
           \node[state, right= of p1, accepting] (q1) {$q_f,\mu$}; \node[state,
           initial, initial text={},below=of p1] (p2) {$i_0,\tau_0$};
           \node[state] at (p2 -| q1) (q2) {$q,\tau$}; \node[right= of q2] (c)
           {};

           \path (p1) edge node {$u \mid u'$} (q1); \path (p2) edge node {$u
             \mid u''$} (q2); \path (q1) edge[loop above] node {$v \mid v'$}
           (q1); \path (q2) edge[loop above] node {$v \mid v''$} (q2); \path
           (q2) edge node {$w \mid w''$} (c);
         \end{tikzpicture}
       \end{minipage}%
       \caption{A pattern characterising non-continuity of functions definable
         by an \nrt}
       \label{fig:patternCont}
     \end{figure}
   \end{corollary}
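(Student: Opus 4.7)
The plan is to chain together the three ingredients already established in the paper. First, I would invoke \autoref{thm:characCont} to reduce the question of non-continuity over $(\Sigma \times \data)^\omega$ to non-continuity over $(\Sigma \times X)^\omega$ for any $X$ with $\dz \in X$ and $\size{X} \geq 2k+3$. This step is immediate once \autoref{thm:characCont} is in place: the function $f$ defined by $T$ is non-continuous iff its restriction $f_X$ to inputs over $X$ (with outputs over $X$) is non-continuous.

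Second, I would apply \autoref{prop:regularRestrictionTrd} to obtain an \nft $T_X$ of size $O(\size{Q} \cdot \size{X}^{\size{R}})$ recognising $\rel{T} \cap \bigl((\Sigma \times X)^\omega \times (\Gamma \times X)^\omega\bigr)$. The key observation is that the states of $T_X$ are exactly pairs $(q, \tau)$ where $q \in Q$ and $\tau : R \to X$ is a register valuation. Hence a transition of $T_X$ from $(q, \tau)$ to $(q', \tau')$ on $(\sigma, d)$ producing $w$ corresponds precisely to a transition of $T$ fireable from configuration $(q, \tau)$ on an input data $(\sigma, d) \in \Sigma \times X$ that happens to yield valuation $\tau'$ and output $w \in (\Gamma \times X)^*$.

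Third, I would invoke the pattern characterisation of non-continuity for finite transducers from \cite{DBLP:journals/corr/abs-1906-04199} (the same result already used in the proof of \autoref{cor:continuity}). Applied to $T_X$, this yields states $p_f$ (accepting) and $p$ of $T_X$, finite words $u, v$ over $\Sigma \times X$, finite outputs $u', u'', v', v''$ over $\Gamma \times X$, an infinite continuation $w$ over $\Sigma \times X$ with output $w''$ over $\Gamma \times X$, initial runs on $u$ reaching $p_f$ (resp.\ $p$) with outputs $u'$ (resp.\ $u''$), loops on $v$ at each with outputs $v'$ (resp.\ $v''$), a continuation run $w \mid w''$ from $p$, and the mismatch disjunction of the statement. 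Unfolding states of $T_X$ as $p_f = (q_f, \mu)$ and $p = (q, \tau)$ with $q_f \in F$ and $\mu, \tau : R \to X$ then transports the pattern verbatim back into $T$, yielding exactly \autoref{fig:patternCont}.

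Finally, the converse direction (pattern implies non-continuity) is the easy one: given the pattern, the sequence of inputs $z_n = u v^n w$ converges to $z = u v^\omega$, but $f(z_n)$ has prefix $u'' (v'')^n$ which, because of the mismatch with $f(z) = u' (v')^\omega$ (either immediately, or, when $v'' = \varepsilon$, through $u'' w''$), prevents $f(z_n)$ from converging to $f(z)$. No step presents a genuine obstacle here; the only mild care needed is to ensure that the pattern extracted from $T_X$ (where configurations are pairs $(q, \tau)$) is re-read as an \nrt-level pattern on $T$, which is immediate from the correspondence of runs established in \autoref{prop:regularRestrictionTrd}.
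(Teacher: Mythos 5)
Your proposal is correct and follows essentially the same route the paper takes: reduce to data in $X$ via \autoref{thm:characCont}, pass to the finite transducer of \autoref{prop:regularRestrictionTrd} whose states are configurations $(q,\tau)$, apply the pattern lemma of \cite{DBLP:journals/corr/abs-1906-04199}, and read the resulting pattern back at the \nrt{} level (the converse direction being the easy pumping argument on $z_n = uv^nw$). The paper gives no more detail than this chain of citations, so your write-up is if anything a more explicit version of the intended argument.
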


\section{Test-free Register Transducers}
\label{sec:testFree}
In~\cite{DBLP:conf/concur/ExibardFR19}, we introduced a restriction which allows
to recover decidability of the bounded synthesis problem for specifications
expressed as non-deterministic register automata. Applied to transducers,
such restriction also yields polynomial complexities when
considering the functionality and computability problems.

An \nrt $T$ is \emph{test-free} when its transition function does not depend on
the tests conducted over the input data. Formally, we say that $T$ is
\emph{test-free} if for all transitions $q \myxrightarrow[T]{\sigma, \tst \mid
    \asgn, o} q'$ we have $\tst = \top$.
Thus, we can omit the tests altogether and its transition relation can be
represented as $\Delta' \subseteq Q \times \Sigma \times 2^R \times (\Gamma
\times R)^* \times Q$.
\begin{example} Consider the function $f:(\Sigma \times \data)^\omega \rightarrow (\Gamma \times \data)^\omega$ associating, to $x = (\sigma_1, d_1) (\sigma_2, d_2)
  \dots$, the value $(\sigma_1, d_1) (\sigma_2, d_1) (\sigma_3, d_1) \dots$ if
  there are infinitely many $a$ in $x$, and $(\sigma_1, d_2) (\sigma_2, d_2)
  (\sigma_3, d_2) \dots$ otherwise.

  $f$ can be implemented using a test-free \nrt with one register: it initially
  guesses whether there are infinitely many $a$ in $x$, if it is the case, it
  stores $d_1$ in the single register $r$, otherwise it waits for the next input
  to get $d_2$ and stores it in $r$. Then, it outputs the content of $r$ along
  with each $\sigma_i$. $f$ is not continuous, as even outputting the first data
  requires reading an infinite prefix when $d_1 \neq d_2$.
\end{example}
Note that when a transducer is test-free, the existence of an accepting run over
a given input $x$ only depends on its finite labels. Hence, the
existence of two outputs $y$ and $z$ which mismatch over data can be
characterised by a simple pattern (\autoref{fig:patternMismatchTf}), which
allows to decide functionality in polynomial time:
\begin{theorem}
  \label{thm:functionalityTf}
  Deciding whether a test-free \nrt is functional is in \PTime.
\end{theorem}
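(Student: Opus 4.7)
The plan is to exploit the absence of tests to reduce functionality of a test-free \nrt $T$ to a polynomial-time reachability question. Non-functionality of $T$ means some input admits two accepting runs $\rho_1,\rho_2$ with differing outputs. Because no transition constrains data, the feasibility of both runs on a given input depends only on the input label sequence $\lab(u)$, so we can characterise non-functionality through two independent witnesses: either (a) the two runs produce mismatching output labels, or (b) their output labels agree but at some position they emit data drawn from two registers whose provenances differ. In a test-free transducer, the content of a register at any point is fully determined by the input position at which it was last assigned (or is $\dz$ if never assigned), so two registers hold equal data iff they share this provenance.

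Case (a) reduces to functionality of the label skeleton $\widehat T$ obtained from $T$ by erasing the data component of every output. The skeleton is an \nft computable from $T$ in polynomial time, and by \autoref{prop:funfinite} its functionality is decidable in \NLogSpace, hence in \PTime. Case (b) is handled by running a product of two copies of $\widehat T$ synchronised on input labels, in which one non-deterministically guesses a pair of registers $(r_1,r_2)$ (one from each copy) and an output position at which they are simultaneously used, then verifies that the provenances of $r_1$ and $r_2$ at that position differ. Tracking this requires only a constant amount of additional information per guessed pair---namely, whether each of $r_1,r_2$ has been reassigned since the last position at which both held the same source---so the product has polynomial size. Together with the B\"uchi acceptance condition on both copies, this yields an \NLogSpace procedure.

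The main obstacle is justifying that it suffices to guess a single pair $(r_1,r_2)$, rather than maintaining the full equality type over the combined $2|R|$ registers (which would be exponential in $|R|$). This hinges on the pattern characterisation of non-functionality alluded to in the referenced figure: once two runs diverge at a given output position, only the two specific registers used there matter, and the simple bit-tracking above faithfully captures whether their provenances can differ. Combining (a) and (b) then yields the \PTime upper bound; no lower bound is claimed, so only membership needs to be established.
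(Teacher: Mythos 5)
Your decomposition is the right one and matches the paper's: a label mismatch reduces to functionality of the data-erased \nft (handled by \autoref{prop:funfinite}), and a data mismatch is characterised by two runs on the same input-label sequence that emit, at the same output position, the contents of two registers with different provenances (last-assignment positions), test-freeness guaranteeing that an input realising two distinct data at those provenances is still accepted. However, there is a genuine gap in how you propose to \emph{check} the data-mismatch pattern. The crux is the clause ``at the same output position'': the transducer is asynchronous, so the two runs produce output at different rates, and the difference between their cumulative output lengths is unbounded. A product of two copies of the skeleton carrying only ``a constant amount of additional information per guessed pair'' (reassignment bits) cannot verify that the guessed emission points of $r_1$ and $r_2$ align on the same output position $i$; a finite-state device cannot remember an unbounded lag. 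Your claimed \NLogSpace bound therefore does not follow, and without this alignment check the construction would accept spurious witnesses (two registers with different provenances output at \emph{different} output positions, which is not a violation of functionality).

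The paper closes exactly this hole with a $2$-counter Parikh automaton: while simulating the two runs in parallel it increments $c_1$ and $c_2$ by the lengths of the outputs produced by each run, freezes each counter at the guessed emission point of the corresponding register, and accepts only if $c_1=c_2$ at the end. Emptiness of Parikh automata with a fixed number of counters, unary updates and an explicitly given semilinear target set is in \PTime, which yields the stated bound. To repair your argument you would either need to introduce such a counting mechanism, or prove a lag-bounding lemma showing that a witness can always be found with polynomially bounded offset between the two output streams---neither of which is present in your proposal.
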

\begin{proof}
  Let $T$ be a test-free \nrt such that $T$ is not functional. Then, there
  exists $x \in (\Sigma \times \data)^\omega$, $y,z \in (\Gamma \times
  \data)^\omega$ such that $(x,y), (x,z) \in \rel{T}$ and $y \neq z$. Then, let
  $i$ be such that $y[i] \neq z[i]$. There are two cases. Either $\lab(y[i])
  \neq \lab(z[i])$, which means that the finite transducer $T'$ obtained by
  ignoring the registers of $T$ is not functional. By \autoref{prop:funfinite},
  such property can be decided in \NLogSpace, so let us focus on the second
  case: $\dt(y[i]) \neq \dt(z[i])$.

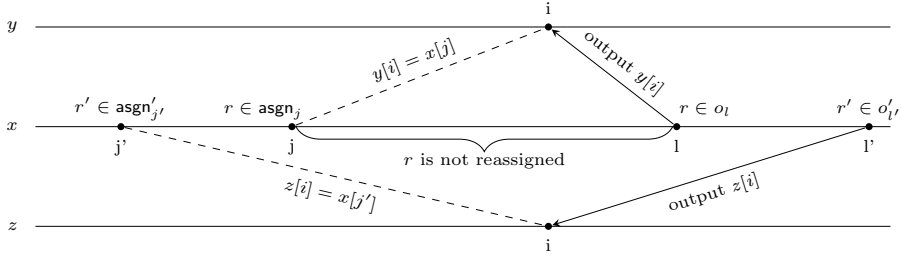
\begin{figure}[ht]
  \centering
\resizebox{\textwidth}{!}{%
\begin{tikzpicture}[x=1.2cm,y=0.7cm,font=\scriptsize]

  \tikzstyle{dot}=[circle,fill,inner sep=1pt]

  \draw (0,-2) -- (10,-2);
  \draw (0,0) -- (10,0);
  \draw (0,2) -- (10,2);

  \coordinate (refLab) at (0,0.35);
  \node[label=left:{$x$}] (p0) at (0,0) {};
  \node[label=left:{$y$}] (p0) at (0,2) {};
  \node[label=left:{$z$}] (p0) at (0,-2) {};
  \node[dot,label=below:{j}] (j1) at (3,0) {};
  \node[dot,label=below:{j'}] (j2) at (1,0) {};
  \node at (refLab-|j2) {$r' \in \asgn'_{j'}$};
  \node[xshift=-0.4cm,yshift=-0.03cm] at (refLab-|j1) {$r \in \asgn_j$};
  \node[dot,label=below:{l}] (l1) at (7.5,0) {};
  \node[xshift=0.4cm] at (refLab -| l1) {$r \in o_l$};
  \node[dot,label=below:{l'}] (l2) at (9.75,0) {};
  \node at (refLab -| l2) {$r' \in o'_{l'}$};
  \node[dot,label=above:{i}] (i1) at (6,2) {};
  \node[dot,label=below:{i}] (i2) at (6,-2) {};
  \draw [-,decorate,decoration={brace,amplitude=10pt,mirror}]
(j1) -- (l1) node [black,midway,below=0.25cm] {$r$ is not reassigned};

  \draw[dashed]       (j1) to node[midway,above,sloped]{$y[i] = x[j]$} (i1);
  \draw[dashed]       (j2) to node[midway,below,sloped]{$z[i] = x[j']$} (i2);
  \draw[->,>=stealth] (l1) to node[midway,above,sloped]{output $y[i]$} (i1);
  \draw[->,>=stealth] (l2) to node[midway,below,sloped]{output $z[i]$} (i2);
\end{tikzpicture}%
}
  \caption{A situation characterising the existence of a mismatch over data. Since
    acceptance does not depend on data, we can always choose $x$ such that
    $\dt(x[j]) \neq \dt(x[j'])$. Here, we assume that the labels of $x,y$ and
    $z$ range over a unary alphabet; in particular $y[i] = x[j]$ iff $\dt(y[i])
    = \dt(x[j])$. Finally, for readability, we did not write that $r'$ should
    not be reassigned between $j'$ and $l'$. Note that the position of $i$ with
    regards to $j,j',l$ and $l'$ does not matter; nor does the position of $l$
    w.r.t. $l'$.}
  \label{fig:patternMismatchTf}
\end{figure}

We first give an outline of the proof: observe that an input $x$ admits two
outputs which mismatch over data if and only if it admits two runs which
respectively store $x[j]$ and $x[j']$ such that $x[j] \neq x[j']$ and output
them later at the same output position $i$; the outputs $y$ and $z$ are then
such that $\dt(y[i]) \neq \dt(z[i])$. Since $T$ is test-free, the existence of
two runs over the same input $x$ only depends on its finite labels. Then, the
registers containing respectively $x[j]$ and $x[j']$ should not be reassigned
before being output, and should indeed output their content at the same position
$i$ (cf \autoref{fig:patternMismatchTf}). Besides, again because of
test-freeness, we can always assume that $x$ is such that $x[j] \neq x[j']$.
Overall, such pattern can be checked by a $2$-counter Parikh automaton, whose
emptiness is decidable in $\PTime$~\cite{DBLP:conf/lics/FigueiraL15} (under
conditions that are satisfied here).

Now, we move to the detailed proof. The case where the mismatch is over the labels reduces to deciding the
functionality of an \nrt, which is in \NLogSpace by \autoref{prop:funfinite}.
Let us thus move the case where the mismatch occurs over the data.

\myparagraph{Characterising a Mismatch}
Let us show that there exists an input $x \in
(\Sigma \times \data)^\omega$, two outputs $y,z \in (\Gamma \times
\data)^\omega$ such that $\dt(y) \neq \dt(z)$ if and only if there exists two
finite sequences of transitions $\nu : q_0 \myxrightarrow{\sigma_1 \mid \asgn_1,
  o_1} \dots \myxrightarrow{\sigma_n \mid \asgn_n, o_n} q_n$ and $\nu' : q'_0
\myxrightarrow{\sigma'_1 \mid \asgn'_1, o'_1} \dots \myxrightarrow{\sigma'_n
  \mid \asgn'_n, o'_n} {q'}_n$, two registers $r$ and $r'$, an (output) position
$i$ and (input) positions $j \neq j'$, $n \geq l \geq j$ and $n \geq l' \geq j'$
such that:
\begin{enumerate}
\item \label{itm:sameLabels} The input labels are the same: for all $1 \leq k
  \leq n, \sigma_k = \sigma'_k$
\item \label{itm:acceptingRuns} $\nu$ and $\nu'$ can yield accepting runs: $q_n$
  and $q'_n$ are coaccessible with the same data word
\item \label{itm:lInputOrigins} $l$ and $l'$ are the respective input positions
  at which the output data at position $i$ is produced: $\length{o_1 \dots
    o_{l-1}} < i \leq \length{o_1 \dots o_l}$ and $\length{o'_1 \dots o'_{l-1}} < i
  \leq \length{o'_1 \dots o'_l}$
\item \label{itm:rOutput} $r$ and $r'$ are the registers whose content is output
  at that moment: by denoting $w = o_1 \cdot o_2 \dots$ and $w' = o'_1 \cdot
  o'_2 \dots$, we have that $w[i] = (\sigma_i, r)$ and $w'[i] = (\sigma'_i, r')$
\item \label{itm:diffContent} $r$ and $r'$ are assigned at a different position,
  respectively $j$ and $j'$: $r \in \asgn_j$, $r' \in \asgn_{j'}$
\item \label{itm:rDataOrigins} $r$ and $r'$ are not reassigned before being
  output: $\forall j < k \leq l, r \notin \asgn_k$, $\forall j' < k \leq l', r
  \notin \asgn'_k$; in other words, $j$ and $j'$ are the respective
  \emph{origins} of the output at position $i$ over $\rho$ and $\rho'$, in the
  sense of~\cite{DBLP:conf/concur/ExibardFR19}
\end{enumerate}
  
$\Leftarrow$ Choose $x = (\sigma_1,d_1) \dots (\sigma_n, d_n) (\sigma_{n+1},
d_{n+1}) \dots$, such that $d_j \neq d_{j'}$. Then, since $T$ is test-free, by
items~\ref{itm:sameLabels} and~\ref{itm:acceptingRuns}, $\nu$ and $\nu'$ both
yield accepting runs of $T$, that we respectively denote $\rho$ and $\rho'$. We
then let $y = \projout(\rho)$ and $z = \projout(\rho')$. Now, let $(q_j,
\tau_j)$ be the configuration of $T$ at position $j$. Since $r \in \asgn_j$, we
have that $\tau_j(r) = d_j$. Then, by item~\ref{itm:rDataOrigins}, since for all
$j < k \leq l$, $r \notin \asgn_k$, we get that $\tau_l(r) = d_j$. Then, by
items~\ref{itm:lInputOrigins} and~\ref{itm:rOutput}, we get that $\dt(y[i]) =
d_j$. Similarly, we have that $\dt(z[i]) = d'_j$. Overall, we get that $x,y$ and
$z$ are such that $(x,y), (x,z) \in \rel{T}$ and $\dt(y) \neq \dt(z)$.

$\Rightarrow$ Now, let $x \in (\Sigma \times \data)^\omega$, $y, z \in (\Gamma
\times \data)^\omega$ be such that $(x,y), (x,z) \in \rel{T}$ and $\dt(y) \neq
\dt(z)$. Let $i \in \N$ be such that $\dt(y[i]) \neq \dt(z[i])$. Let $\rho$ and
$\rho'$ be two accepting runs of $T$ such that $\projin(\rho) = \projin(\rho') =
x$ and $\projout(\rho) = y$, $\projout(\rho') = z$. Then, let $l$ and $l'$ be
such that $\length{o_1 \dots o_{l-1}} < i \leq \length{o_1 \dots o_l}$ and
$\length{o'_1 \dots o'_{l-1}} < i \leq \length{o'_1 \dots o'_l}$, and define $n =
\max(l,l')$.
Let $\nu = q_0 \myxrightarrow{\sigma_1 \mid \asgn_1, o_1} q_1 \dots
\myxrightarrow{\sigma_n \mid \asgn_n, o_n} q_n$ and $\nu' = q'_0
\myxrightarrow{\sigma'_1 \mid \asgn'_1, o'_1} q'_1 \dots
\myxrightarrow{\sigma'_n \mid \asgn'_n, o'_n} q_n$ be their respective sequences
of transitions, truncated at length $n$.
  
First, for all $1 \leq k \leq n, \sigma_k = \sigma'_k = x[k]$, and $\sigma_{n+1}
\sigma_{n+2} \dots$ indeed yiels a final run from $q_n$ and $q'_n$ since $\rho$
and $\rho'$ are accepting.
Now, let $r$ and $r'$ be such that, by denoting $w = o_1 \cdot o_2 \dots$ and
$w' = o'_1 \cdot o'_2 \dots$, we have $w[i] = (\sigma_i, r)$ and $w'[i] =
(\sigma'_i, r')$. Then, define $j = \max \{k \leq l \mid r \in \asgn_k\}$ and
$j' = \max \{k' \leq l' \mid r' \in \asgn_{k'}\}$ (with the convention that
$\max \varnothing = 0$). By definition, item~\ref{itm:rDataOrigins} then holds
and $l \geq j$, $l' \geq j'$. Finally, by denoting $d_j = \dt(x[j])$ and $d_{j'}
= \dt(x[j'])$ (with the convention that $\dt(x[0]) = \dz$), we have that
$\tau_j(r) = d_j$ and $\tau_{j'}(r') = d_{j'}$. Since $r$ and $r'$ are not
reassigned before position $l$ and $l'$ respectively, we get that $\tau_l(r) =
d_j$ and $\tau_{l'}(r') = d_{j'}$, so $d_j = \dt(y[i])$ and $d_{j'} =
\dt(z[i])$, which means that $d_j \neq d_{j'}$, so $j \neq j'$.

\myparagraph{Recognising the Pattern}
Now, checking those properties can be done by a $2$-counter Parikh automaton. A
Parikh automaton is a finite automaton equipped with a finite number of counters
that it can increment and decrement, and which take their values in $\Z$ (i.e.
they are allowed to go below zero). There are no tests over the counters during
the run, but, to determine acceptance, at the end of the run, the automaton
checks whether its counters satisfy a given Presburger formula, or,
equivalently, belong to some semi-linear set. When increments and decrements are
encoded in unary, the semi-linear set is given explicitly, and the dimension
(i.e. the number of counters) is fixed, emptiness of Parikh automata is in
\PTime. This is the case here: there are only increments, which can be given in
unary as they correspond to the length of the output words in the transitions,
and at the end of the run, we only need to check that the two counters are
equal, i.e. belong to the diagonal $c_1=c_2$.

Let us describe the automaton: initially, it guesses the finite labels of $\nu$
and $\nu'$. Then, there are two phases: it first simulates the two partial runs
in parallel, only with regards to input labels. Formally, in this phase it can
take a transition $q \myxrightarrow{\sigma} q'$ if and only if there exists some
transition $q \myxrightarrow{\sigma \mid \asgn, o} q'$ in $T$. All the while, it
keeps track of the length of the output which has been produced by each run
using its two counters $c_1$ and $c_2$: $c_1$ is each time incremented by
$\length{o_1}$ (resp. $c_2$ by $\length{o_2}$). It then guesses position $j$ and
$j'$, and the corresponding registers $r$ and $r'$. Afterwards, it keeps
simulating both runs, but additionnally checks that $r$ (resp. $r'$) is not
reassigned after position $j$ (resp. $j'$). To that end, it avoids all
transitions $q \myxrightarrow{\sigma \mid \asgn, o} q'$ such that $\asgn \ni r$
(resp. $\asgn \ni r'$). Then, it guesses positions $l$ and $l'$ which are the
respective origins of the mismatch. At this point, it increments $c_1$ (resp.
$c_2$) by $\length{w}$, where $w$ is such that $o_l = w_1 r w_1'$ (resp.
$\length{w_2}$, where $w_2$ is such that $o_{l'} = w_2 r w'_2$) and then stops
counting the length of the corresponding outputs until the end of the run, i.e.
it stops incrementing $c_1$ (resp. $c_2$). Finally, at the end of the run, it
checks that $c_1 = c_2$, and that $q_n$ and $q'_n$ are co-accessible with the
same $\omega$-word (here, the registers do not matter anymore). This is doable
in polynomial time: precompute all pairs $(q_n, q'_n)$ such that
$\textsf{coacc}(q_n, q'_n)$ hold, such predicate being computable in \NLogSpace
as demonstrated in \autoref{prop:funfinite}.

Then, such automaton is polynomial in the size of $T$. Since emptiness of such
automaton is in $\PTime$, we can decide the above properties, i.e. whether $T$
is functional, in $\PTime$.
\end{proof}

Now, let us move to the case of continuity. Here again, the fact that test-free
\nrt conduct no test over the input data allows to focus on the only two
registers that are responsible for the mismatch, the existence of an accepting
run being only determined by finite labels.
\begin{theorem}
  \label{thm:continuityTf}
  Deciding whether a test-free \nrt defines a continuous function is in \PTime.
\end{theorem}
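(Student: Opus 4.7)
My plan is to follow the same strategy as \autoref{thm:functionalityTf}, but relying this time on the pattern characterizing non-continuity given by the corollary preceding this theorem. The idea is to reduce the search for that pattern to the nonemptiness of a polynomial-size $2$-counter Parikh automaton, which can be checked in \PTime.

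I would first split the problem according to whether the mismatch required by the pattern of \autoref{fig:patternCont} occurs at the level of the finite labels of the outputs or only at the level of the data. In the label case, the existence of such a pattern in $T$ reduces to non-continuity of the underlying \nft $T'$ obtained by erasing registers and data from outputs: since $T$ is test-free, acceptance in $T$ depends only on input labels, so $T'$ is a faithful label-abstraction of $T$ and its patterns coincide. Non-continuity of an \nft is in \NLogSpace by the result already cited for the corollary, hence this case is in \PTime.

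For the data case, I would build a $2$-counter Parikh automaton $P$ as follows. Along a guessed common labelling of the prefix $u$, it simulates two parallel transition sequences of $T$, one ending at an accepting state $q_f \in F$ and the other at some state $q$, both compatible with a shared accepting continuation (a coaccessibility predicate that can be precomputed in \NLogSpace, exactly as in \autoref{thm:functionalityTf}). It then nondeterministically proceeds in one of the two disjuncts of the pattern: in case (a) it guesses a mismatching output position already within $u'/u''$; in case (b) it verifies that the loop $v$ produces $v'' = \varepsilon$ on the second side, and keeps simulating the first run on further iterations of $v$ while the second run follows a guessed continuation $w$, until the mismatch position is reached. Two counters $c_1, c_2$ accumulate, in unary, the respective output lengths produced so far; the acceptance predicate $c_1 = c_2$ forces both runs to emit their mismatching letter at the same output index. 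To certify a mismatch over data, $P$ additionally guesses, on each side, a register $r$, $r'$ together with the input position $j$, $j'$ at which it was last assigned before being output, checks that $r$ (resp. $r'$) is not reassigned between $j$ (resp. $j'$) and the chosen output position, and requires $j \neq j'$. By test-freeness, any data labelling of the input is consistent with the simulated skeleton, so choosing pairwise distinct data at positions $j$ and $j'$ yields a data mismatch at the chosen output position. The overall automaton $P$ is of polynomial size in $T$ and its emptiness is in \PTime by the Parikh-automaton result cited in \autoref{thm:functionalityTf}.

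The main obstacle will be case (b) of the pattern, since the mismatch between $u' v'^\omega$ and $u'' w''$ can in principle lie arbitrarily far along $w''$ or deep into the iterates of $v'$, and I cannot afford an explicit bound in a \PTime procedure. The Parikh-automaton framework sidesteps this issue: both runs are simulated in parallel along an unbounded input, each updating its own counter by the length of its current output, and the semilinear constraint $c_1 = c_2$ together with the guessed origins $(j,j')$ of the two mismatching registers certifies a finite witness without an a priori bound on $w$. Since $P$ has only two unary counters and a single diagonal acceptance predicate, its emptiness remains in \PTime, yielding the desired polynomial-time algorithm.
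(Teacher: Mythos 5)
Your overall strategy---reduce the non-continuity pattern to emptiness of a polynomial-size $2$-counter Parikh automaton, using the ``two registers assigned at distinct input positions $j\neq j'$ and output at the same output position'' characterisation of a data mismatch, with unary counters enforcing the alignment---is exactly the paper's, and your treatment of the label-mismatch case and of the unbounded mismatch position in case (b) is in the right spirit. There are, however, two genuine gaps.

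First, the side-condition you impose in case (a) is the wrong one. You require the two states $q_f$ and $q$ reached after $u$ to be ``compatible with a shared accepting continuation,'' i.e.\ the coaccessibility predicate of \autoref{thm:functionalityTf}. That is the condition for \emph{non-functionality}: two runs on a common prefix $u$ with $\mismatch(u',u'')$ that both extend to accepting runs over the \emph{same} infinite continuation exhibit two distinct images of a single input, which is impossible when $T$ is functional. The continuity pattern (\autoref{fig:patternContTf}) instead requires a \emph{synchronised loop} on some $v$ at the pair $(q_f,q)$ with $q_f\in F$ (so that $uv^\omega\in\dom(f)$ via the first run) together with an accepting continuation from $q$ after the loop (so that $uv^nw\in\dom(f)$ via the second run); the two runs then accept \emph{different} inputs, which is what reconciles the mismatch with functionality. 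As written, your case-(a) automaton is empty on every functional $T$ and misses all witnesses whose loop is productive on both sides; the fix is to precompute, for pairs $(q_f,q)$, the existence of a synchronised loop plus accepting continuation, which is still a regular property checkable in polynomial time.

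Second, the pattern given by the corollary (\autoref{fig:patternCont}) requires the loops to return to the same \emph{register configurations} $(q_f,\mu)$ and $(q,\tau)$. Your Parikh automaton tracks only states---necessarily so, since tracking valuations $R\to X$ would cost $|X|^{2k}$ states and destroy the \PTime{} bound---so you owe a proof that the configuration-free, state-level pattern still implies non-continuity. The paper spends the first half of its proof on precisely this reduction (from \autoref{fig:patternCont} to \autoref{fig:patternContTf}): one argues that since $v$ is a fixed data word the configuration reached after $uv^n$ is the same for all $n\ge 1$, that output lengths and acceptance depend only on labels by test-freeness, and that the mismatch can be truncated to finite words, so pumping the state-level loop still yields $y_n\to x$ with $f(y_n)\not\to f(x)$. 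Your one-line appeal to test-freeness justifies choosing distinct data at positions $j$ and $j'$, hence the existence of two mismatching runs, but not this soundness step for the pumped family of inputs.
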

\begin{proof}
  Let $T$ be a test-free \nrt.
  \myparagraph{A simpler pattern}
  Let us first show that $T$ is continuous
  if and only if $T$ has the pattern of \autoref{fig:patternContTf},
  where $r$ is coaccessible (since acceptance only depends on finite labels, $T$
  can be trimmed\footnote{We say that $T$ is trim when all its states are both
    accessible and coaccessible.} in polynomial time).
\begin{figure}[ht]
  \centering
  \begin{minipage}{0.4\textwidth}
    \begin{align*}
      & \exists u,v \in (\Sigma \times \data)^*, \\
      & \exists u', u'', v', v'' \in (\Gamma \times \data)^*, \\
      & \exists z \in (\Sigma \times \data)^\omega, \\
      & \exists z'' \in (\Gamma \times \data)^\omega, \\
      & \exists q,r \in Q, \exists q_f \in F, \\
      &\mismatch(u',u'')~\vee \\
      &v'' = \varepsilon \wedge \mismatch(u',u''z'')
    \end{align*}
  \end{minipage}%
  \begin{minipage}{0.6\textwidth}
    \begin{tikzpicture}[->,>=stealth',auto,node
      distance=1.25cm,thick,scale=0.9,every node/.style={scale=0.85}]
      \tikzstyle{every state}=[text=black, font=\scriptsize,
      fill=yellow!30,minimum size=7.5mm]

      \node[state, initial, initial text={}] (p1) {$i_0$}; \node[state, right=
      of p1, accepting] (q1) {$q_f$}; \node[state, right= of q1, accepting] (r1)
      {$q_f$}; \node[state, initial, initial text={},below=of p1] (p2) {$i_0$};
      \node[state] at (p2 -| q1) (q2) {$q$}; \node[state] at (p2 -| r1) (r2)
      {$q$}; \node[state, right= of r2] (s) {$r$};

      \path (p1) edge node {$u \mid u'$} (q1); \path (p2) edge node {$u \mid
        u''$} (q2); \path (q1) edge node {$v \mid v'$} (r1); \path (q2) edge
      node {$v \mid v''$} (r2); \path (r2) edge node {$z \mid z''$} (s);
    \end{tikzpicture}
  \end{minipage}%
  \caption{A pattern characterising non-continuity of
    functions definable by an \nrt. Configurations are not depicted, as there
    are no conditions on them. We unrolled the loops to highlight the fact that
    they do not necessarily loop back to the same configuration.}
  \label{fig:patternContTf}
  \end{figure}
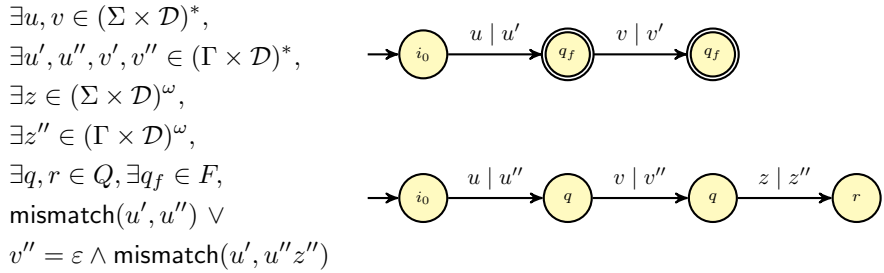
  By \autoref{thm:characCont}, we already know that $T$ is not continuous if and only if it has the
  pattern of \autoref{fig:patternCont}. Now, the left-to-right direction is quite direct, since the pattern of \autoref{fig:patternContTf} is simpler: assume $T$ is
  not continuous. Then it has the pattern of \autoref{fig:patternCont}. Thus, if
  $\mismatch(u',u'')$, by picking $w = w'' = \varepsilon$ and $r = q$, we get
  the pattern of \autoref{fig:patternContTf}. Otherwise, choose two finite data
  words $z$ and $z''$ and $r$ such that $z \prec w$, $z'' \prec w''$, $q
  \myxrightarrow{z \mid z''} q'$ and $\mismatch(u',u''z'')$. Such finite words
  exist because the mismatch between $u'$ and $u'' w''$ happens at some finite
  position, and it suffices to truncate $w$ and $w''$ up to such position.

  Conversely, assume $T$ has the pattern of \autoref{fig:patternContTf}. Then, let $x =
  uv^{\omega}$, and define $(y_n)_{n \in \N}$ as $y_n = uv^nwz$ for all $n \in
  \N$, where $z \in (\Sigma \times \omega)$ is such that there is a final run
  over $z$ from $r$. Such $z$ exists since $r$ is coaccessible; in the following
  we denote $z''$ its corresponding image. Then, $y_n \xrightarrow[n \rightarrow
  \infty]{} x$; however, for all $n \in \N$, $f(y_n) = u'' (v'')^n w'' z''$, so
  $\length{f(x) \wedge f(y_n)} \leq \length{u'}$, since either $v'' \neq
  \varepsilon$ and then there is a mismatch between $u'$ and $u''$, or there is
  a mismatch between $u'$ and $u''w''$, which means that $T$ is not continuous.

  \myparagraph{Checking the pattern}
  Now, it remains to show that such simpler pattern can be checked in $\PTime$. We treat
  each part of the disjunction separately:
  \begin{enumerate}[(a)]
  \item there exists $u,u',u'',v,v',v''$ such that $i_0 \myxrightarrow{u \mid
      u'} q_f \myxrightarrow{v \mid v'} q_f$ and $i_0 \myxrightarrow{u \mid u''}
    q \myxrightarrow{v \mid v''} q$, where $q_f \in F$ and $\mismatch(u',u'')$.
    Then, as shown in the proof of \autoref{thm:functionalityTf}, there exists a
    mismatch between some $u'$ and $u''$ produced by the same input $u$ if and
    only if there exists two runs and two registers $r$ and $r'$ assigned at two
    distinct positions, and later on output at the same position.
    Such pattern can be checked by a $2$-counter Parikh automaton similar to the
    one described in the proof of \autoref{thm:functionalityTf}; the only
    difference is that here, instead of checking that the two end states are
    coaccessible with a common $\omega$-word, we only need to
    check that $q_f \in F$ and that there is a synchronised loop over $q_f$ and
    $q$, which are regular properties that can be checked by the Parikh
    automaton with only a polynomial increase.
  \item there exists $u,u',u'',v,v',z,z''$ such that $i_0 \myxrightarrow{u \mid
      u'} q_f \myxrightarrow{v \mid v'} q_f$ and $i_0 \myxrightarrow{u \mid u''}
    q \myxrightarrow{v \mid \varepsilon} q \myxrightarrow{z \mid z''} r$, where
    $q_f \in F$ and $\mismatch(u',u''z'')$. By examining again the proof of
    \autoref{thm:functionalityTf}, it can be shown that to obtain a mismatch, it
    suffices that the input is the same for both runs only up to position
    $\max(j,j')$. More precisely, there is a mismatch between $u'$ and $u''z''$
    if and only if there exists two registers $r$ and $r'$ and two positions
    $j,j' \in \{1, \dots, \length{u}\}$ such that $j \neq j'$, $r$ is stored at
    position $j$, $r'$ is stored at position $j'$, $r$ and $r'$ are respectively
    output at input positions $l \in \{1, \dots, \length{u}\}$ and $l' \in \{1,
    \dots, \length{uz}\}$ and they are not reassigned in the meantime. Again, such
    property, along with the fact that $q_f \in F$ and the existence of a
    synchronised loop can be checked by a $2$-counter Parikh automaton of
    polynomial size.
  \end{enumerate}
  Overall, deciding whether a test-free \nrt is continuous is in \PTime.
\end{proof}


%
%
%
\bibliographystyle{splncs04} \bibliography{Bibliography}
%


\vfill

{\small\medskip\noindent{\bf Open Access} This chapter is licensed under the terms of the Creative Commons\break Attribution 4.0 International License (\url{http://creativecommons.org/licenses/by/4.0/}), which permits use, sharing, adaptation, distribution and reproduction in any medium or format, as long as you give appropriate credit to the original author(s) and the source, provide a link to the Creative Commons license and indicate if changes were made.}

{\small \spaceskip .28em plus .1em minus .1em The images or other third party material in this chapter are included in the chapter's Creative Commons license, unless indicated otherwise in a credit line to the material.~If material is not included in the chapter's Creative Commons license and your intended\break use is not permitted by statutory regulation or exceeds the permitted use, you will need to obtain permission directly from the copyright holder.}

\medskip\noindent\includegraphics{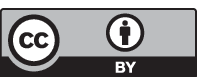}

\end{document}